\renewcommand\section{\@startsection{section}{1}{\z@}%
                       {-10\p@ \@plus -5\p@ \@minus -4\p@}%
                       {4\p@ \@plus 3\p@ \@minus 2\p@}%
                       {\normalfont\large\bfseries\boldmath
                        \rightskip=\z@ \@plus 8em\pretolerance=10000 }}
\renewcommand\subsection{\@startsection{subsection}{2}{\z@}%
                       {-8\p@ \@plus -2\p@ \@minus -2\p@}%
                       {7\p@ \@plus 2\p@ \@minus 2\p@}%
                       {\normalfont\normalsize\bfseries\boldmath
                        \rightskip=\z@ \@plus 8em\pretolerance=10000 }}
\renewcommand\subsubsection{\@startsection{subsubsection}{3}{\z@}%
                       {-4\p@ \@plus -4\p@ \@minus -2\p@}%
                       {-0.5em \@plus -0.22em \@minus -0.1em}%
                       {\normalfont\normalsize\bfseries\boldmath}}
  \savedanchor\centerpoint{\pgfpointorigin}
  \savedanchor\centerpoint{\pgfpointorigin}
\tikzset{
      functornode/.style={
        draw=black!80,
        line width=0.5pt,
        anchor=out,
        xshift=8mm,
        fill=none,
        execute at begin node=$,%
        execute at end node=$,%
        rounded corners= 2pt,
      },
      unary/.style={
        functornode,
        shape=unaryfunctor,
        minimum height = 2em,
        minimum width = 2.2em,
      },
      binary/.style={
        functornode,
        shape=binaryfunctor,
        inner ysep=1pt,
      },
}
\newcommand\noshowkeys{\def\hideNextShowKeysLabel{test}}
\renewcommand*\showkeyslabelformat[1]{%
\@ifundefined{hideNextShowKeysLabel}{%
\noexpandarg%
\StrSubstitute{#1}{ }{\textvisiblespace}[\TEMP]%
\parbox[t]{\marginparwidth}{\raggedright\normalfont\small\ttfamily\(\{\){\color{red!50!black}\expandafter\seqsplit\expandafter{\TEMP}}\(\}\)}%
}{}
}
\newcommand{\copar}{\textsf{CoPaR}\xspace}
\spnewtheorem{thm}[theorem]{Theorem}{\bfseries}{\itshape}
\spnewtheorem{cor}[theorem]{Corollary}{\bfseries}{\itshape}
\spnewtheorem{cnj}[theorem]{Conjecture}{\bfseries}{\itshape}
\spnewtheorem{lem}[theorem]{Lemma}{\bfseries}{\itshape}
\spnewtheorem{lemdefn}[theorem]{Lemma and Definition}{\bfseries}{\itshape}
\spnewtheorem{prop}[theorem]{Proposition}{\bfseries}{\itshape}
\spnewtheorem{defn}[theorem]{Definition}{\bfseries}{\upshape}
\spnewtheorem{rem}[theorem]{Remark}{\bfseries}{\upshape}
\spnewtheorem{notation}[theorem]{Notation}{\bfseries}{\upshape}
\spnewtheorem{expl}[theorem]{Example}{\bfseries}{\upshape}
\spnewtheorem{thmdefn}[theorem]{Theorem and Definition}{\bfseries}{\itshape}
\spnewtheorem{propdefn}[theorem]{Proposition and Definition}{\bfseries}{\itshape}
\spnewtheorem{assumption}[theorem]{Assumption}{\bfseries}{\upshape}
\spnewtheorem{algorithm}[theorem]{Algorithm}{\bfseries}{\upshape}
 \renewenvironment{theorem}{\begin{thm}}{\end{thm}}
 \renewenvironment{corollary}{\begin{cor}}{\end{cor}}
 \renewenvironment{proposition}{\begin{prop}}{\end{prop}}
 \renewenvironment{definition}{\begin{defn}}{\end{defn}}
 \renewenvironment{remark}{\begin{rem}}{\end{rem}}
 \renewenvironment{example}{\begin{expl}}{\end{expl}}
\newcommand\vartextvisiblespace[1][.5em]{%
  \makebox[#1]{%
    \kern.07em
    \vrule height.3ex
    \hrulefill
    \vrule height.3ex
    \kern.07em
  }
}
\newcounter{blubber}
\setlist[enumerate,1]{label=(\arabic*),ref=(\arabic*),font=\normalfont,align=left,leftmargin=0pt,labelindent=0pt,listparindent=\parindent,labelwidth=0pt,itemindent=!,topsep=3pt,parsep=0pt,itemsep=3pt,start=1}
\setlist[itemize]{labelindent=*,leftmargin=*,topsep=3pt,itemsep=2pt}
\def\moverlay{\mathpalette\mov@rlay}
\def\mov@rlay#1#2{\leavevmode\vtop{%
   \baselineskip\z@skip \lineskiplimit-\maxdimen
   \ialign{\hfil$\m@th#1##$\hfil\cr#2\crcr}}}
\newcommand{\charfusion}[3][\mathord]{
    #1{\ifx#1\mathop\vphantom{#2}\fi
        \mathpalette\mov@rlay{#2\cr#3}
      }
    \ifx#1\mathop\expandafter\displaylimits\fi}
\newcommand{\id}{\mathsf{id}}
\newcommand{\inl}{\mathsf{inl}}
\newcommand{\inr}{\mathsf{inr}}
\newcommand{\Set}{\ensuremath{\mathsf{Set}}\xspace}
\newcommand{\Pow}{{\mathcal{P}_\omega}}
\newcommand{\Bag}{{\mathcal{B}_\omega}}
\newcommand{\BagM}{\ensuremath{\mathcal{B}(M_{\neq 0})}}
\newcommand{\Powf}{\mathcal{P}_\omega}
\newcommand{\Bagf}{\ensuremath{\mathcal{B}}\xspace}
\newcommand{\Dist}{\ensuremath{{\mathcal{D}_\omega}}\xspace}
\newcommand{\R}{\mathds{R}}
\newcommand{\N}{\mathds{N}}
\newcommand{\Z}{\mathds{Z}}
\newcommand{\CO}{\mathcal{O}}
\newcommand{\op}[1]{\ensuremath{\mathsf{#1}}}
\newcommand{\fpair}[1]{\langle#1\rangle}
\newcommand{\rifactor}{\ensuremath{p}} 
\tikzset{
  coalgebra drawing/.style={
    state/.append style={
      minimum width=0pt,
      minimum height=0pt,
      inner sep=0.8mm,
    },
    every edge/.append style={
      shorten <= 1pt,
      shorten >= 1pt,
    }
  }
}
\def\epito{\twoheadrightarrow}
\newcommand{\csum}{\raisebox{-1pt}{\text{\large$\mathrm{\Sigma}$}}}
\newcommand{\takeout}[1]{\empty}
\def\thanks#1{\footnotemark
    \protected@xdef\@thanks{\@thanks
        \protect\footnotetext[\the\c@footnote]{#1}}%
\def\thanks##1{\addtocounter{footnote}{-1}\footnotemark}
}
\begin{document}

\title{Generic Partition Refinement and\texorpdfstring{\\}{} Weighted Tree
  Automata}

\author{Hans-Peter Deifel \and
  Stefan Milius\thanks{Supported by the DFG project COAX (MI 717/5-2 and SCHR
    1118/12-2)}\and
  Lutz Schröder\thanks{}\and 
  Thorsten Wißmann\thanks{}
}
\institute{Friedrich-Alexander-Universit\"{a}t
  Erlangen-N\"urnberg, Germany\\
\email{\{hans-peter.deifel,stefan.milius,lutz.schroeder,thorsten.wissmann\}@fau.de}}

\maketitle

\begin{abstract} Partition refinement is a method for minimizing
  automata and transition systems of various types. Recently, we have
  developed a partition refinement algorithm that is generic in the
  transition type of the given system and matches the run time of the
  best known algorithms for many concrete types of systems,
  e.g.~deterministic automata as well as ordinary, weighted, and
  probabilistic (labelled) transition systems. Genericity is achieved
  by modelling transition types as functors on sets, and systems
  as coalgebras. In the present work, we refine the run time analysis
  of our algorithm to cover additional instances, notably weighted
  automata and, more generally, weighted tree automata.  For weights
  in a cancellative monoid we match, and for non-cancellative monoids
  such as (the additive monoid of) the tropical semiring even
  substantially improve, the asymptotic run time of the best known
  algorithms. We have implemented our algorithm in a generic tool that
  is easily instantiated to concrete system types by implementing a
  simple refinement interface. Moreover, the algorithm and the tool
  are modular, and partition refiners for new types of systems are
  obtained easily by composing pre-implemented basic
  functors. Experiments show that even for complex system types, the
  tool is able to handle systems with millions of transitions.
\end{abstract}

\section{Introduction} Minimization is a basic verification task on
state-based systems, concerned with reducing the number of system
states as far as possible while preserving the system behaviour. It is
used for equivalence checking of systems and as a preprocessing step
in further system analysis tasks, such as model checking. 

In general, minimization proceeds in two steps:~(1) remove unreachable
states, and~(2) identify behaviourally equivalent states. Here, we are
concerned with the second step, which depends on which notion of
equivalence is imposed on states; we work with notions of
\emph{bisimilarity} and generalizations thereof. Classically,
bisimilarity for labelled transition systems obeys the principle
``states $x$ and $y$ are bisimilar if for every transition $x \to x'$,
there exists a transition $y \to y'$ with $x'$ and $y'$ bisimilar''
.
It is thus given via a fixpoint
definition, to be understood as a \emph{greatest} fixpoint, and can
therefore be iteratively approximated from above. This is the
principle behind \emph{partition refinement} algorithms: Initially all
states are tentatively considered equivalent, and then this initial
partition is iteratively refined according to observations made on the
states until a fixpoint is reached. Unsurprisingly, such procedures
run in polynomial time.  Its comparative tractability (in contrast,
e.g.~trace equivalence and language equivalence of non-deterministic
systems are PSPACE complete~\cite{KanellakisS90}) makes miminization
under bisimilarity interesting even in cases where the main
equivalence of interest is linear-time, such as word automata.

Kanellakis and Smolka~\cite{KanellakisS90} in fact provide a
minimization algorithm with run time $\CO(m\cdot n)$ for ordinary
transition systems with $n$ states and $m$ transitions. However, even
faster partition refinement algorithms running in
$\CO((m+n)\cdot \log n)$ have been developed for various types of
systems over the past 50 years. For example, Hopcroft's algorithm
minimizes deterministic automata for a fixed input alphabet $A$ in
$\CO(n\cdot \log n)$ \cite{Hopcroft71}; it was later generalized to
variable input alphabets, with run time $\CO(n\cdot |A|\cdot \log n)$
\cite{Gries1973,Knuutila2001}. The Paige-Tarjan algorithm minimizes
transition systems in time $\CO((m+n)\cdot \log n)$
\cite{PaigeTarjan87}, and generalizations to labelled transition
systems have the same time complexity
\cite{HuynhTian92,DerisaviEA03,Valmari09}. Minimization of weighted
systems is typically called \emph{lumping} in the literature, and
Valmari and Franchescini~\cite{ValmariF10} have developed a simple
$\CO((m+n)\cdot \log n)$ lumping algorithm for systems with rational
weights.

In earlier work~\cite{concur2017,concurSpecialIssue} we have developed
an efficient \emph{generic} partition refinement algorithm that can be
easily instantiated to a wide range of system types, most of the time
either matching or improving the previous best run time. The
genericity of the algorithm is based on modelling state-based systems
as coalgebras following the paradigm of universal
coalgebra~\cite{Rutten00}, in which the branching structure of systems
is encapsulated in the choice of a functor, the \emph{type
  functor}. This allows us to cover not only classical relational
systems and various forms of weighted systems, but also to combine
existing system types in various ways, e.g.~nondeterministic and
probabilistic branching. Our algorithm uses a functor-specific
\emph{refinement interface} that supports a graph-based representation
of coalgebras. It allows for a generic complexity analysis, and indeed
the generic algorithm has the same asymptotic complexity as the
above-mentioned specific algorithms;\twnote{nicht ganz war für
  automaten mit variablem $A$ oder LTS } for Segala
systems~\cite{Segala95} (systems that combine probabilistic and
non-deterministic branching, also known as Markov decision processes),
it even improves on the best known run time and matches the run time
of a recent algorithm~\cite{GrooteEA18} discovered independently and
almost at the same time.

The new contributions of the present paper are twofold. On the
theoretical side, we show how to instantiate our generic algorithm to
weighted systems with weights in a monoid (generalizing the
group-weighted case considered
previously~\cite{concur2017,concurSpecialIssue}). We then refine the
complexity analysis of the algorithm, making the complexity of the
implementation of the type functor a parameter $\rifactor(n,m)$, where $n$ and
$m$ are the numbers of nodes and edges, respectively, in the graph
representation of the input coalgebra.  In the new setup, the previous
analysis becomes the special case where $\rifactor(n,m) = 1$. Under the same
structural assumptions on the type functor and the refinement
interface as previously, our algorithm runs in time
$\CO(m \cdot \log n\cdot \rifactor(n,m))$. Instantiated to the case of
weighted systems over non-cancellative monoids (with
$\rifactor(n,m) = \log(n)$), such as the additive monoid $(\N,\max,0)$ of the
tropical semiring, the run time of the generic algorithm is
$\CO(m\cdot\log^2 m)$, thus markedly improving the run time
$\CO(m\cdot n)$ of previous algorithms for weighted
automata~\cite{Buchholz08} and, more generally, (bottom-up) weighted tree
automata~\cite{HoegbergEA07} for this case. In addition, for
cancellative monoids, we again essentially match the complexity of the
previous algorithms~\cite{Buchholz08,HoegbergEA07}.\twnote{This is
  wrong, they have $\CO(m\cdot \log n)$ and we $\CO(m\cdot \log m)$.
  LS: Which is the same if the alphabet is fixed, so I added
  ``essentially''}

Our second main contribution is a generic and modular implementation
of our algorithm, the \emph{Coalgebraic Partition Refiner}
(\copar). Instantiating \copar to coalgebras for a given functor
requires only to implement the refinement interface. We provide such
implementations for a number of basic type functors, e.g.\ for
non-deterministic, weighted, or probabilistic branching, as well as
(ranked) input and output alphabets or output weights. In addition,
\copar is \emph{modular}: For any type functor obtained by composing
basic type functors for which a refinement interface is available,
\copar automatically derives an implementation of the refinement
interface. We explain in detail how this modularity is realized in our
implementation and, extending Valmari and Franchescini's
ideas~\cite{ValmariF10}, we explain how the necessary data structures
need to be implemented so as to realize the low theoretical
complexity. We thus provide a working efficient partition refiner for
all the above mentioned system types. In particular, our tool is, to
the best of our knowledge, the only available implementation of
partition refinement for many composite system types, notably for
weighted (tree) automata over non-cancellative monoids. The tool
including source code and evaluation data is available at
\url{https://git8.cs.fau.de/software/copar}.

\smnote{TODO: mention special tools, e.g.~CADP, mCRL2, LTSmin.}

\takeout{
\paragraph{Organization} In \autoref{sec:theofoundations} we recall
the necessary technical background and the principles behind our
generic
algorithm~\cite{concur2017,concurSpecialIssue}.

\smnote[inline]{TODO: from here need to adjust to new text.}
In \autoref{sec:partref}, the
usage of our tool is described and how it can be extended.
In particular, \autoref{sec:des} discusses in
detail how the mechanism for combining system types is realized within
the tool. Some concrete instantiations are described in
\autoref{sec:instances}.

We illustrate the flexibility of our algorithm and tool in
\autoref{sec:wta} by implementing the refinement interface of the type
functor for weighted (tree) automata which yields an efficient partition refiner
for this type of systems. We demonstrate the efficiency by a number of
benchmarks.\smnote{This par is new text moved here from section 3.}

In \autoref{sec:bench} we evaluate our tool
by benchmarking it for some system types, and in particular provide a
comparison with existing specific tools. Additional proofs can be found in the
full version online~\cite{fullversion}.}

\section{Theoretical Foundations}

\label{sec:theofoundations}
Our algorithmic framework~\cite{concur2017,concurSpecialIssue} is
based on modelling state-based systems abstractly as \emph{coalgebras}
for a (set) \emph{functor} that encapsulates the transition type,
following the paradigm of \emph{universal
  coalgebra}~\cite{Rutten00}. We proceed to recall standard notation
for sets and maps, as well as basic notions and examples in
coalgebra. We fix a singleton set $1=\{*\}$; for every set~$X$ we have
a unique map $!\colon X\to 1$. We denote composition of maps by
$(-)\cdot(-)$, in applicative order. Given maps $f\colon X\to A$,
$g\colon X\to B$ we define $\fpair{f,g}\colon X\to A\times B$ by
$\fpair{f,g}(x) = (f(x),g(x))$.  We model the transition type of state
based systems using \emph{functors}. Informally, a functor~$F$ assigns
to a set~$X$ a set~$FX$ of structured collections over~$X$, and an
$F$-coalgebra is a map~$c$ assigning to each state~$x$ in a system a
structured collection $c(x)\in FX$ of successors. The most basic
example is that of transition systems, where~$F$ is powerset, so a
coalgebra assigns to each state a set of successors. Formal
definitions are as follows.

\begin{definition}
  A \emph{functor} $F\colon \Set\to\Set$ assigns to each
  set~$X$ a set~$FX$, and to each map $f\colon X\to Y$ a map
  $Ff\colon FX\to FY$, preserving identities and composition
  ($F\id_X=\id_{FX}$, $F(g\cdot f)=Fg\cdot Ff$).  An
  \emph{$F$-coalgebra} $(C,c)$ consists of a set~$C$ of
  \emph{states} and a \emph{transition} structure $c\colon C\to FC$.
  A \emph{morphism} $h\colon (C,c)\to (D,d)$ of $F$-coalgebras is a
  map $h\colon C\to D$ that preserves the transition structure, i.e.~$Fh\cdot c
  = d\cdot h$. Two states $x,y\in C$ of a coalgebra $c\colon C\to FC$ are
  \emph{behaviourally equivalent} ($x \sim y$) if there exists a coalgebra
  morphism $h$ with $h(x) = h(y)$.
\end{definition}
\begin{example} \label{exManyFunctors}
  \begin{enumerate}
  \item \label{exManyFunctors:Pow}
    The \emph{finite powerset} functor $\Pow$ 
    maps a set~$X$ to the set~$\Pow X$ of all \emph{finite} subsets
    of~$X$, and a map $f\colon X\to Y$ to the map
    $\Pow f = f[-]\colon \Pow X\to \Pow Y$ taking direct images.
    $\Pow$-coalgebras are finitely branching
    (unlabelled) transition systems. and two states are behaviourally
    equivalent iff they are bisimilar.
  \item For a fixed finite set $A$, the functor given by
    $FX=2\times X^A$, where $2 = \{0,1\}$, sends a set $X$ to the set
    of pairs of boolean values and functions $A\to X$. An
    $F$-coalgebra $(C,c)$ is a deterministic automaton (without an
    initial state). For each state $x\in C$, the first component of
    $c(x)$ determines whether $x$ is a final state, and the second
    component is the successor function $A\to X$ mapping each input
    letter $a\in A$ to the successor state of $x$ under input letter
    $a$. States $x,y \in C$ are behaviourally equivalent iff they
    accept the same language in the usual sense.
  \item\label{exManyFunctors:3} For a commutative monoid $(M,+,0)$,
    the \emph{monoid-valued} functor $M^{(-)}$ sends each set $X$ to
    the set of maps $f\colon X\to M$ that are finitely supported, i.e.~
    $f(x) = 0$ for almost all $x\in X$. An $F$-coalgebra
    $c\colon C\to M^{(C)}$ is, equivalently, a finitely branching
    $M$-weighted transition system: For a state $x\in C$, $c(x)$ maps
    each state $y\in C$ to the weight $c(x)(y)$ of the transition from
    $x$ to $y$. For a map $f\colon X\to Y$,
    $M^{(f)}\colon M^{(X)}\to M^{(Y)}$ sends a finitely supported map
    $v\colon X\to M$ to the map
    $y\mapsto \sum_{x\in X, f(x) = y} v(x)$, corresponding to the
    standard image measure construction. As the notion of behavioural
    equivalence of states in $M^{(-)}$-coalgebras, we obtain weighted
    bisimilarity (cf.~\cite{Buchholz08,KlinS13}), given coinductively
    by postulating that states $x,y\in C$ are behaviourally equivalent
    $(x\sim y)$ iff
      \[\textstyle
        \sum_{z'\sim z} c(x)(z')
        = \sum_{z'\sim z} c(y)(z')\qquad \text{for all $z\in C$}.
      \]
      For the Boolean monoid $(2=\{0,1\},\vee,0)$, the monoid-valued
      functor $2^{(-)}$ is (naturally isomorphic to) the finite
      powerset functor $\Pow$.  For the monoid of real numbers
      $(\R,+,0)$, the monoid-valued functor $\R^{(-)}$ has
      $\R$-weighted systems as coalgebras, e.g.~Markov chains. In
      fact, finite Markov chains are precisely finite coalgebras of
      the \emph{finite distribution functor}, i.e.~the
      subfunctor~$\Dist$ of $\R_{\ge 0}^{(-)}$ (and hence
      of~$\R^{(-)}$) given by
      $\Dist(X) = \{\mu \in \R_{\ge 0}^{(X)}\mid\sum_{x \in X} \mu(x)
      = 1 \}$.
      For the monoid $(\N,+,0)$ of natural numbers, the monoid-valued
      functor is the bag functor $\Bag$, which maps a set $X$ to the
      set of finite multisets over~$X$.
  \end{enumerate}
\end{example}
%
%

\section{Generic Partition Refinement}
\label{sec:partref}

We recall some key aspects of our generic partition refinement
algorithm~\cite{concur2017,concurSpecialIssue}, which \emph{minimizes} a
given coalgebra, i.e.~computes its quotient modulo behavioural
equivalence; we center the presentation around the implementation and
use of our tool.

The algorithm~\cite[Algorithm 4.5]{concurSpecialIssue} is parametrized
over a type functor $F$, represented by implementing a fixed
\emph{refinement interface}, which in particular allows for a
representation of $F$-coalgebras in terms of nodes and edges (by no
means implying a restriction to relational systems!). Our previous
analysis has established that the algorithm minimizes $F$-coalgebras
with~$n$ nodes and~$m$ edges in time $\CO(m\cdot \log n)$, assuming $m\ge n$ and
that the operations of the refinement interface run in
linear time. In the present paper, we generalize the analysis,
establishing a run time in $\CO(m\cdot \log n\cdot \rifactor(n,m))$, where
$\rifactor(n,m)$ is a factor in the time complexity of the operations implementing the
refinement interface, and depends on the functor at hand. In many
instances, $\rifactor(n,m) = 1$, reproducing the previous analysis. In
some cases, $\rifactor(n,m)$ is not constant, and our new
analysis still applies in these cases, either matching or
improving the best known run time in most instances, most notably
weighted systems over non-cancellative monoids.

We proceed to discuss the design of the implementation,
including input formats of our tool \copar for composite functors built from
pre-implemented basic blocks and for systems to be minimized
(\autoref{sec:parsing}). The refinement interface and its
implementation are described in \autoref{sec:refintface}.

\subsection{Generic System Specification}
\label{sec:parsing}
\copar accepts as input a file that represents a finite $F$-coalgebra
$c\colon C\to FC$, and consists of two parts. The first part is a
single line specifying the functor $F$. Each of the remaining lines
describes one state $x\in C$ and its one-step behaviour
$c(x)$. Examples of input files are shown in
\autoref{fig:example-input}.

\begin{figure}[b]%
  \centering%
  \vspace{-2mm}
  \begin{subfigure}[b]{0.5\textwidth}%
    \begin{minipage}[b]{.55\textwidth}%
\begin{verbatim}
DX

q: {p: 0.5, r: 0.5}
p: {q: 0.4, r: 0.6}
r: {r: 1}
\end{verbatim}
    \end{minipage}
    \hfill%
    \begin{tikzpicture}[coalgebra drawing]
      \node (q1) at (135:9mm) [state] {$q$};
      \node (q2) at (-135:9mm) [state] {$p$};
      \node (q3) at (0:6mm) [state] {$r$};
      \path[->,bend angle=20]
      (q1) edge [left, bend right] node[overlay] {$\frac{1}{2}$} (q2)
      (q1) edge [above,bend left] node[overlay] {$\frac{1}{2}$} (q3)
      (q2) edge [right, bend right] node {$\frac{2}{5}$} (q1)
      (q2) edge [below, bend right,overlay] node {$\frac{3}{5}$} (q3)
      (q3) edge [loop above] node {$1$} (q3);
    \end{tikzpicture}%
    \hfill\hspace*{0pt}%
    \caption{Markov chain}
    \label{subfig:markov}
  \end{subfigure}%
  \hfill%
  \begin{subfigure}[b]{0.48\textwidth}
    \begin{minipage}[b]{.58\textwidth}
\begin{verbatim}
{f,n} x X^{a,b}

q: (n, {a: p, b: r})
p: (n, {a: q, b: r})
r: (f, {a: q, b: p})
\end{verbatim}
    \end{minipage}%
    \hfill%
    \begin{tikzpicture}[coalgebra drawing]
      \node (q) at (135:10mm) [state] {$q$};
      \node (p) at (-135:10mm) [state] {$p$};
      \node (r) at (0:8mm) [state,accepting] {$r$};
      \path[->,bend angle=15,
      every node/.append style={
        shape=circle,
        inner sep=1pt,
        fill=white,
        anchor=center,
      }]
      (q) edge [bend right] node {a} (p)
      (q) edge [bend right] node {b} (r)
      (p) edge [bend right] node {a} (q)
      (p) edge [bend left] node {b} (q3)
      (r) edge [bend right] node {a} (q)
      (r) edge [bend left] node {b} (p);
    \end{tikzpicture}
    \hfill%
    \caption{Deterministic finite automaton}
    \noshowkeys\label{fig:dfa-picture}
  \end{subfigure}
  \caption{Examples of input files with encoded coalgebras}
  \vspace{-4mm}
  \label{fig:example-input}
\end{figure}

\subsubsection*{Functor specification.} Functors are specified as
composites of basic building blocks; that is, the functor given in the
first line of an input file is an expression determined by the grammar
\begin{equation}
T ::= \texttt{X} \mid F(T,\ldots,T) \qquad (F\colon \Set^k\to \Set) \in \mathcal{F},
\label{termGrammar}
\end{equation}
where the character $\texttt{X}$ is a terminal symbol and
$\mathcal{F}$ is a set of predefined symbols called \emph{basic
  functors}, representing a number of pre-implemented functors of type
$F\colon \Set^k\to \Set$. Only basic functors need to be implemented
explicitly (\autoref{sec:refintface}); for composite functors, the
tool derives instances of the algorithm automatically
(\autoref{sec:des}). Basic functors currently implemented include
(finite) powerset~$\Pow$, the bag functor~$\Bag$,
monoid-valued functors~$M^{(-)}$, and polynomial functors for finite
many-sorted signatures $\Sigma$, based on the description of the
respective refinement interfaces given in our previous
work~\cite{concur2017,concurSpecialIssue} and, in the case of
$M^{(-)}$ for unrestricted commutative monoids~$M$ (rather than only
groups) the newly developed interface described in
\autoref{sec:wta}. Since behavioural equivalence is preserved and
reflected under converting $G$-coalgebras into $F$-coalgebras for a
subfunctor~$G$ of~$F$~\cite[Proposition~2.13]{concurSpecialIssue}, we
also cover subfunctors, such as the finite distribution
functor~$\Dist$ as a subfunctor of $\R^{(-)}$. With the polynomial
constructs $+$ and $\times$ written in infix notation as usual, the
currently supported grammar is effectively
\begin{align}
  T &::= \texttt{X}
      \mid
      \Pow\, T \mid \Bag\, T \mid \Dist\, T
          \mid M^{(T)} 
      \mid \Sigma \label{termGrammar2}
\\
  \Sigma &::= C \mid T + T \mid T \times T \mid T^A
      \quad~
  C ::= \mathbb{N} ~\vert~ A
      \quad~
  A ::= \{s_1,\ldots,s_n\} \notag
\end{align}
where the $s_k$ are strings subject to the usual conventions for
C-style identifiers, exponents $F^A$ are written \verb|F^A|, and
$M$ is one of the monoids $(\Z,+,0)$, $(\R,+,0)$, $(\mathds{C},+,0)$,
$(\Powf(64), \cup, \emptyset)$ (i.e.\ the monoid of $64$-bit words
with bitwise $\op{or}$),
and $(\N,\max,0)$ (the additive monoid of the tropical semiring). Note
that~$C$ effectively ranges over at most countable sets, and~$A$ over
finite sets. A term~$T$ determines a functor $F\colon \Set\to\Set$ in
the evident way, with~$\texttt{X}$ interpreted as the argument,
i.e.~$F(\texttt{X}) = T$. It should be noted that the implementation
treats composites of polynomial (sub-)terms as a single functor in
order to minimize overhead incurred by excessive decomposition,
e.g.~$X \mapsto \{0,1\} + \Pow(\R^{(X)}) + X \times X$ is composed
from the basic functors $\Pow$, $\R^{(-)}$ and the $3$-sorted
polynomial functor $\Sigma(X,Y,Z) = \{0,1\} + X + Y \times Z$.

\subsubsection*{Coalgebra specification.} The remaining lines of an input
file define a finite $F$-coalgebra $c\colon C\to FC$. Each line of the
form $x\texttt{:}\text{\textvisiblespace} t$ defines a state $x\in C$,
where $x$ is a C-style identifier, and $t$ represents the element
$t=c(x)\in FC$. The syntax for $t$ depends on the specified
functor~$F$, and follows the structure of the term~$T$ defining~$F$;
we write $t\in T$ for a term $t$ describing an element
of~$FC$:
\begin{itemize}
\item $t\in\texttt{X}$ iff~$t$ is one of the named
  states specified in the file.
\item $t\in T_1\times\dots\times T_n$ is given by
  $t ::= (t_1,\dots,t_n)$ where $t_i \in T_i$, $i=1,\dots, n$.
\item $t\in T_1 + \dots + T_n$ is given by
  $t ::= \texttt{inj}\text{\textvisiblespace}
  i\text{\textvisiblespace} t_i$ where $i=1,\dots, n$ and $t_i \in T_i$.
\item $t\in \Pow T $ and $t\in\Bag T$ are given by
  $t ::= \texttt{\{}t_1,\ldots,t_n\texttt{\}}$ with
  $t_1,\dots,t_n \in T$.
\item $t\in M^{(T)}$ is given by
  $t::=\texttt{\{}t_1\texttt{:}\text{\textvisiblespace}m_1\texttt{,}
  \ldots\texttt{,}$
  $t_n\texttt{:}\text{\textvisiblespace}m_n \texttt{\}}$ with
  $m_1,\dots,m_n\in M$ and $t_1,\dots,t_n\in T$, denoting
  $\mu \in M^{(TC)}$ with $\mu(t_i) = m_i$ and $\mu(t)=0$ otherwise.
\end{itemize}
For example, for the functor~$F$ given by the term
$T=\Pow(\{a,b\}\times\R^{(X)}$), the
one-line declaration \texttt{x: \{(a,\{x: 2.4\}), (a,\{\}), (b,\{x:
  -8\})\}} defines an $F$-coalgebra with a single state~$x$, having two
$a$-successors and one~$b$-successor, where successors are elements of
$\R^{(X)}$. One $a$-successor is constantly zero, and the other assigns
weight~$2.4$ to~$x$; the $b$-successor assigns weight $-8$ to~$x$. Two more
examples are shown in Fig.~\ref{fig:example-input}.
 
\subsubsection*{Parsing input files.} After reading the functor term $T$,
the tool builds a parser for the functor-specific input format and
parses an input coalgebra specified in the above syntax into an
intermediate format described in \autoref{sec:refintface}. In the case
of a composite functor, the 
parsed coalgebra\smnote{Is this correct? ``System'' could also mean
  ``our tool'' here.} then undergoes a substantial amount of
preprocessing that also affects how transitions are counted; we defer
the discussion of this point to \autoref{sec:des}, and assume for the
time being that $F\colon \Set\to\Set$ is a basic functor with only one
argument.

\subsection{Refinement Interfaces}\label{sec:refintface} New functors are
added to the framework by implementing a \emph{refinement interface}
(\autoref{D:refinement-interface}). The interface relates to an abstract
encoding of the functor and its coalgebras in terms of nodes and edges:
\begin{definition}\label{D:enc}
  An \emph{encoding} of a functor $F$ consists of a set $A$ of
  \emph{labels} and a family of maps
  $\flat\colon FX\to \Bag(A\times X)$, one for every set~$X$. The
  \emph{encoding} of an $F$-coalgebra $c\colon C\to FC$ is given by
  the map
  $\fpair{F!,\flat}\cdot c\colon C\to F1\times \Bag(A\times C)$ and we
  say that the coalgebra has $n = |C|$ states and
  $m = \sum_{x\in C}|\flat(c(x))|$ edges.
\end{definition}
\noindent
%
%
An encoding does by no means imply a reduction from $F$-coalgebras to
$\Bag(A\times (-))$-coalgebras, i.e.\ the notions of behavioural
equivalence for $\Bag(A\times (-))$ and~$F$, respectively, can be
radically different.  The encoding just fixes a representation format,
and $\flat$ is not assumed to be natural (in fact, it fails to be
natural in all encodings we have implemented except the one for
polynomial functors). Encodings typically match how one intuitively
draws coalgebras of various types as certain labelled graphs. For
instance for Markov chains (see Fig.~\ref{fig:example-input}),
i.e.~coalgebras for the distribution functor~$\Dist$, the set of
labels is the set of probabilities $A = [0,1]$, and
$\flat\colon \Dist X\to \Bag([0,1]\times X)$ assigns to each finite
probability distribution $\mu\colon X\to [0,1]$ the bag
$\{ (\mu(x),x) \mid x\in X, \mu(x) \neq 0\}$.

The implementation of a basic functor contains two ingredients:
(1)~a parser that transforms the syntactic specification of an
input coalgebra (\autoref{sec:parsing}) into the encoded
coalgebra, and (2)~the implementation of the refinement
interface.%
\smnote{Wouldn't it be better to move the definition of
  refinement interface here?}
\smnote{Some explanation that the user has to come up with $A$ and
  $\flat$ for a new functor is needed somewhere.}

\begin{wrapfigure}[10]{r}[0pt]{4cm}%
  \vspace*{-5pt}
\centering%
\raisebox{0pt}[\dimexpr\height-1.6\baselineskip\relax]{\begin{tikzpicture}
  \node (x) at (0.65,1.5) {$x$};
  \begin{scope}[
    every node/.append style={
      shape=circle,
      draw=none,
      fill=black,
      inner sep=0pt,
      outer sep=2pt,
      minimum width=1pt,
      minimum height=4pt,
      anchor=center,
    }
    ]
    \node (y1) at (0,0) {};
    \node (y2) at (1.3,0) {};
    \node (y3) at (2,0) {};
  \end{scope}
  \node[anchor=west] (y4) at (y3.east) {$\ldots$};
  \draw (y1) edge[draw=none] node {$\ldots$} (y2);
  \begin{scope}[
    every node/.append style={
      draw=black!30,
      rounded corners=2mm,
    }
    ]
    \node[fit=(y1) (y2)] (S) {};
    \node[fit=(y3) (y4)] (CminusS) {};
    \node[fit=(S) (CminusS)] (C) {};
  \end{scope}
  \begin{scope}[
    bend angle=10,
    space/.style={
      draw=white,
      line width=4pt,
    },
    edge/.style={
      draw=black,
      ->,
      every node/.append style={
        fill=white,
        shape=rectangle,
        inner sep=1pt,
        anchor=base,
        pos=0.55,
      },
    },
    ]
    \draw[space,bend right] (x) edge (y1);
    \draw[edge,bend right] (x) edge node[alias=s1] {$a_1$} (y1);
    \draw[space,bend left] (x) edge (y2);
    \draw[edge,bend left] (x) edge node[alias=sk] {$a_k$} (y2);
    \draw[space,bend left] (x) edge (y3);
    \draw[edge,bend left] (x) edge node[alias=c] {$b$} (y3);
    \coordinate (dotends) at ([xshift=10mm]c);
    \draw[edge,-,bend left] (x) edge (dotends);
    \draw[edge,-,shorten >=3mm,dotted] (dotends) edge +(4mm,-4mm);
  \end{scope}
  \draw (s1) edge[draw=none] node {$\cdots$} (sk);
  \draw (c) edge[draw=none] node {$\cdots$} (dotends);
  \begin{scope}[every node/.append style={anchor=north,yshift=-1mm,font=\footnotesize}]
    \draw[decorate,decoration={brace,mirror,amplitude=4pt}]
    ([yshift=-6mm]C.south west) -- node {$B$}
    ([yshift=-6mm]C.south east);
    \draw[decorate,decoration={brace,mirror,amplitude=4pt}]
    ([yshift=-2mm]S.south west) -- node {$S$}
    ([yshift=-2mm]S.south east);
    \draw[decorate,decoration={brace,mirror,amplitude=4pt}]
    ([yshift=-2mm]CminusS.south west) -- node {$B\setminus S$}
    ([yshift=-2mm]CminusS.south east);
  \end{scope}
\end{tikzpicture}}
\vspace{-1mm}
\caption{Splitting a block}
\noshowkeys
\label{fig:splitblock}
\end{wrapfigure}
To understand the motivation behind the
definition of a refinement interface, suppose that the generic
partition refinement has already computed some block of states
$B\subseteq C$ in its partition and that states in $S\subseteq B$
have different behaviour than those in $B\setminus S$. From this
information, the algorithm has to infer whether states $x,y\in C$ that
are in the same block and have successors in $B$ exhibit different
behaviour and thus have to be separated. For example, in the classical
Paige-Tarjan algorithm~\cite{PaigeTarjan87}, i.e.~for $F=\Pow$, $x$
and $y$ can stay in the same block provided that (a)~$x$ has a
successor in $S$ iff $y$ has one and (b)~$x$ has a successor in
$B\setminus S$ iff $y$ has one. Equivalently,
$\Pow\fpair{\chi_S,\chi_{B\setminus S}}(c(x)) =
\Pow\fpair{\chi_S,\chi_{B\setminus S}}(c(y))$,
where $\chi_S\colon C\to 2$ is the usual characteristic function of
the subset $S\subseteq C$. In the example of Markov chains,
i.e.~$F= \Dist$, $x,y\in C$ can stay in the same block if
$\sum_{x'\in S} c(x)(x') = \sum_{y' \in S} c(y)(y')$ and
$\sum_{x'\in B\setminus S} c(x)(x') =\sum_{y'\in B\setminus S}
c(y)(y')$,
i.e.~if
$\Dist\fpair{\chi_S,\chi_{B\setminus S}}(c(x)) =
\Dist\fpair{\chi_S,\chi_{B\setminus S}}(c(y))$.
Note that the element $(1,1)$ is not in the image of
$\fpair{\chi_S,\chi_{B\setminus S}}\colon C\to 2\times 2$. Since,
moreover, $S\subseteq B$, we can equivalently consider the map
\begin{equation}\label{eq:ax}
  \chi_S^B\colon C\to 3,
  \quad
  \chi_S^B(x \in S) = 2,
  \quad
  \chi_S^B(x \in B\setminus S) = 1,
  \quad
  \chi_S^B(x \in C\setminus B) = 0.
\end{equation}
That is, two states $x,y\in C$ can stay in the same block in the
refinement step provided that \(
F\chi_S^B(c(x)) = F\chi_S^B(c(y)).
\)
Thus, it is the task of a refinement interface to compute
$F\chi_S^B\cdot c$ efficiently and incrementally.
\smnote{TODO: read above and below and think about how to improve to
  address referee \#2.}
\begin{definition}\label{D:refinement-interface}
  Given an encoding $(A,\flat)$ of the set functor $F$,
  a \emph{refinement interface} for $F$ consists of 
  a set $W$ of \emph{weights} and functions
  \[
    \op{init}\colon F1×\Bag A\to W \qquad\text{and}\qquad
    \op{update}\colon \Bag A × W \to W× F3× W
  \]
  %
  %
  satisfying the following coherence condition: There exists a family of
  \emph{weight maps} $w\colon \mathcal{P} X \to (FX \to W)$ such that for all $t\in
  FX$ and all sets $S\subseteq B\subseteq X$,
  \[
    \begin{array}[b]{rl}
    w(X)(t) &= \op{init}\big(F!(t), \Bag \pi_1( \flat (t))\big)
    \\[1mm]
    (w(S)(t),F\chi_S^B(t),w(B\!\setminus\! S)(t))
    &= \op{update}\big(\{a\mid (a,x)\in \flat(t), x\in S\}, w(B)(t)\big).
    \end{array}
  \]
\end{definition}
Note that the comprehension in the first argument of $\op{update}$ is
to be read as a \emph{multiset} comprehension. In contrast to
$\op{init}$ and $\op{update}$, the function $w$ is not called by the
algorithm and thus does not form part of the refinement
interface. However, its existence ensures the correctness of our
algorithm. Intuitively, $X$ is the set of states of the input
coalgebra $(C,c)$, and for every $x \in C$, $w(B)(c(x)) \in W$ is the
overall weight of edges from $x$ to the block $B\subseteq C$ in
the coalgebra $(C,c)$. The axioms in
\autoref{D:refinement-interface} assert that $\op{init}$ receives in
its first argument the information which states of $C$ are
(non-)terminating, in its second argument the bag of labels of all
outgoing edges of a state $x \in C$ in the graph representation of
$(C,c)$, and it returns the total weight of those edges.  The
operation \op{update} receives a pair consisting of the bag of labels
of all edges from some state $x \in C$ into the set $S \subseteq C$
and the weight of all edges from $x$ to $B\subseteq C$, and from only
this information (in particular $\op{update}$ does not know $x$, $S$,
and $B$ explicitly) it computes the triple consisting of the weight
$w(S)(c(x))$ of edges from $x$ to $S$, the result of
$F\chi_S^B \cdot c(x)$ and the weight $w(B\setminus S)(c(x))$ of edges
from $x$ to $B\setminus S$ (e.g.~in the Paige-Tarjan algorithm, the
number of edges from $x$ to $S$, the value for the three way split,
and the number of edges from $x$ to $B\setminus S$,
cf.~\autoref{fig:splitblock}). Those two computed weights are needed
for the next refinement step, and $F\chi_S^B \cdot c(x)$ is used by
the algorithm to decide whether or not two states $x,y \in C$ that
are contained in the same block and have some successors in $B$ remain
in the same block for the next iteration.
%
%
%

For a given functor $F$, it is usually easy to derive the operations
$\op{init}$ and $\op{update}$ once an appropriate choice of the set
$W$ of weights and weight maps~$w$ is made, so we describe only the
latter in the following; see~\cite{concur2017,concurSpecialIssue} for
full definitions.

\begin{example}
  \begin{enumerate}
  \item For $F=\R^{(-)}$ we can take $W=\R^{2}$, and $w(B)(t)$ records
    the accumulated weight of $X\setminus B$ and $B$:
    $w(B)(t) = \big(\sum_{x\in X\setminus B} t(x), \sum_{x\in
      B}t(x)\big)$, i.e.~$w(B) = F\chi_B\colon FX\to F2$.
  \item More generally, let $F$ be one of the functors
    $G^{(-)}, \Bag,\Sigma$ where $G$ is a group and $\Sigma$ a
    signature with bounded arity, represented as a polynomial
    functor. Then we can take $W = F2$ (e.g.\ $W=\R^2$ for
    $F=\R^{(-)}$ as above) and $w(B) = F\chi_B\colon FX\to F2$.
  \item For $F=\Pow$, we need $W= 2 \times \N$, and
    $w(B)(t) = (|t\setminus B| \ge 1,|t\cap B|)$ records whether there
    is an edge to $X\setminus B$ and counts the numbers of edges
    into $B$.
  \end{enumerate}
\end{example}

\noindent In order to ensure that iteratively splitting blocks using
$F\chi_S^B$ in each iteration correctly computes the minimization of
the given coalgebra, we require that the type functor~$F$ is
\emph{zippable}, i.e.\ the evident maps
\( {\fpair{F(X+!),F(!+Y)}}\colon {F(X+Y)}\longrightarrow F(X+1)\times
F(1+Y) \) are injective~\cite[Definition~5.1]{concurSpecialIssue}. All
functors mentioned in \autoref{exManyFunctors} are zippable, and
zippable functors are closed under products, coproducts, and
subfunctors~\cite[Lemma~5.4]{concurSpecialIssue} but not under functor
composition; e.g.~$\Pow\Pow$ fails to be
zippable~\cite[Example~5.9]{concurSpecialIssue}.

%
%
The main correctness result~\cite{concurSpecialIssue} states that for
a zippable functor equipped with a refinement interface, our algorithm
correctly minimizes the given coalgebra. The low time complexity of our
algorithm hinges on the time complexity of the implementations of
$\op{init}$ and $\op{update}$. We have shown
previously~\cite[Theorem~6.22]{concurSpecialIssue} that if both
$\op{init}$ and $\op{update}$ run in linear time in the input list (of
type $\Bag A$) \emph{alone} (independently of $n,m$), then our generic
partition refinement algorithm runs in time $\CO((m + n)\cdot \log n)$
on coalgebras with $n$ states and $m$ edges. In order to cover
instances where the run time of $\op{init}$ and $\op{update}$ depends
also on~$n,m$, we now generalize this to the following new result:
\begin{theorem}\label{T:comp}
  Let $F$ be a zippable functor equipped with a refinement interface.
  Suppose further that $\rifactor(n,m)$ is a function such that in every run of
  the partition refinement algorithm on $F$-coalgebras with~$n$ states
  and~$m$ edges,
  \begin{enumerate}
  \item\label{T:comp:1} all calls to $\op{init}$ and $\op{update}$ on
    $\ell \in \Bag A$ run in time $\CO(|\ell|\cdot \rifactor(n,m))$;
  \item\label{T:comp:2} all comparisons of values of type $W$ run in
    time $\CO(\rifactor(n,m))$.
  \end{enumerate}
  Then the algorithm runs in overall time
  $\CO((m+n)\cdot \log n \cdot \rifactor(n,m))$.
\end{theorem}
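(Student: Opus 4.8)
The plan is to reduce the statement to the earlier complexity analysis~\cite[Theorem~6.22]{concurSpecialIssue}, which is exactly the special case $\rifactor(n,m)=1$, and then re-account for the two extra cost factors introduced by the hypotheses. First I would recall the skeleton of that earlier argument, which splits the total running time of a run into four parts: (a) the single initialization pass, which calls $\op{init}$ once per state and touches each outgoing edge exactly once; (b) the repeated refinement steps, which call $\op{update}$; (c) all comparisons of values of type $W$, used to group states by their value under $F\chi_S^B$; and (d) the remaining bookkeeping overhead for maintaining blocks, worklists, and the partition data structures. Under the earlier assumptions of constant-time $W$-comparisons and linear-time $\op{init}$, $\op{update}$, parts (a) and (d) are $\CO(m+n)$ and $\CO((m+n)\cdot\log n)$ respectively, while the crux is bounding (b) and (c).

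For part (b) the key is the Hopcroft-style ``process the smaller half'' strategy built into the algorithm: whenever a block is split, only the smaller resulting part is used as the set $S$ in subsequent calls to $\op{update}$. A standard amortized argument then shows that each edge is passed to $\op{update}$ at most $\CO(\log n)$ times over the whole run, so that the total size of all lists $\ell\in\Bag A$ ever handed to $\op{update}$ is $\CO(m\cdot\log n)$; since $\op{update}$ runs in time linear in $|\ell|$ this bounds part (b) by $\CO(m\cdot\log n)$. The same amortized count of edge-visits bounds the number of $W$-comparisons, yielding $\CO((m+n)\cdot\log n)$ for part (c). Note that $n$ and $m$ are invariant across a run, since refinement never adds or removes states or edges.

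The decisive observation for the generalization is that, by hypothesis, $\rifactor(n,m)$ is a function of these run-invariant parameters $n$ and $m$ alone, hence a fixed scalar factor that does not depend on the particular call, block, or refinement step. I would therefore re-run the accounting with two changes dictated by the hypotheses: processing a list $\ell$ in $\op{init}$ or $\op{update}$ now costs $\CO(|\ell|\cdot\rifactor(n,m))$ instead of $\CO(|\ell|)$, and each $W$-comparison costs $\CO(\rifactor(n,m))$ instead of $\CO(1)$. Because $\rifactor(n,m)$ is constant across each of the sums bounding (a), (b) and (c), it factors out of them, turning $\CO(m+n)$ into $\CO((m+n)\cdot\rifactor(n,m))$, $\CO(m\cdot\log n)$ into $\CO(m\cdot\log n\cdot\rifactor(n,m))$, and $\CO((m+n)\cdot\log n)$ into $\CO((m+n)\cdot\log n\cdot\rifactor(n,m))$. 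The overhead (d) is unchanged at $\CO((m+n)\cdot\log n)$ and is absorbed, since we may assume $\rifactor(n,m)\ge 1$ (reading a list of length $|\ell|$ already costs $\CO(|\ell|)$). Summing the four parts gives the claimed bound $\CO((m+n)\cdot\log n\cdot\rifactor(n,m))$.

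The main obstacle is making the factoring-out step genuinely rigorous rather than merely plausible: one must confirm that neither the hidden constants nor the amortized edge-visit count depends on $\rifactor$, and in particular that the smaller-half amortization counts purely combinatorial edge-visits that are entirely independent of the per-step cost weighting. This is precisely why the hypotheses are phrased so that $\rifactor$ depends only on $n$ and $m$ and not on the dynamically changing sets $S$, $B$ or the current partition; with that phrasing the weighting is a uniform scalar, and the earlier amortized bounds carry over verbatim after multiplication by $\rifactor(n,m)$.
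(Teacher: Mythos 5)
Your proposal is correct and follows essentially the same route as the paper: both reduce to the previously proved case $\rifactor(n,m)=1$, observe that the earlier analysis counts basic operations (totalling $\CO((m+n)\cdot\log n)$, including those inside $\op{init}$, $\op{update}$, and the $W$-comparisons), and then exploit that $\rifactor(n,m)$ depends only on the run-invariant parameters $n,m$ so that it scales every operation uniformly and factors out. Your version merely spells out the decomposition and the smaller-half amortization in more detail than the paper's terser argument; no gap.
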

Obviously, for $\rifactor(n,m)\in\CO(1)$, we obtain the
previous complexity. Indeed, for the functors $G^{(-)}$, $\Pow$,
$\Bag$, where $G$ is an abelian group, we can take $\rifactor(n,m) = 1$; this
follows from our previous work~\cite[Examples~6.4
and~6.6]{concurSpecialIssue}.  For a \emph{ranked alphabet}~$\Sigma$,
i.e.~a signature with arities of operations bounded by, say,~$r$, we
can take $\rifactor(m,n) = r \in \CO(1)$ if $\Sigma$ (or
just~$r$) is fixed.
%
%
We will discuss in \autoref{sec:wta} how \autoref{T:comp} instantiates
to weighted systems, i.e.~to monoid-valued functors
$M^{(-)}$ for unrestricted commutative monoids $M$.

\subsection{Combining Refinement Interfaces}\label{sec:des}%
In addition to supporting genericity via direct implementation of the
refinement interface for basic functors, our tool is \emph{modular} in
the sense that it automatically derives a refinement interface for
functors built from the basic ones according to the grammar
\eqref{termGrammar}. In other words, for such a functor the user does
not need to write a single line of new code. Moreover, when the user
implements a refinement interface for a new basic functor, this
automatically extends the effective grammar.

For example, our tool can minimize systems of type
$FX= \Dist(\N \times \Pow X\times \Bag X)$. To achieve this, a given
$F$-coalgebra is transformed into one for the functor
\(
  F' X = \Dist X +  (\N \times X \times X)  + \Pow X + \Bag X.
\)
\begin{figure}[t]
  \centering
  \begin{tikzpicture}
    \node[unary] (Potf) {\Dist};
    \node[binary] (times) at (Potf.in) {\Sigma};
    \node[unary] (Bagf) at ([yshift=2mm]times.in1) {\Pow};
    \node[unary] (Dist) at ([yshift=-2mm]times.in2) {\Bag};
    \path[draw=black!50]
      (times.out) edge node[above] {$Y$} (Potf.in)
      (Bagf.out) edge node[above] {$Z_1$} (times.in1)
      (Dist.out) edge node[below] {$Z_2$} (times.in2)
      (Dist.in) edge node[above] {$X$} +(8mm,0)
      (Potf.out) edge node[above] {$X$} +(-8mm,0)
      (Bagf.in) edge node[above] {$X$} +(8mm,0);
  \end{tikzpicture}
  \caption{Visualization of $FX = \Dist (\Sigma(\Pow X, \Bag X))$ for $\Sigma (Z_1,Z_2) = \N\times Z_1\times Z_1$}
  \label{fig:composedFunctor}
  \vspace{-5mm}
\end{figure}%
This functor is obtained as the sum of all basic functors involved in
$F$, i.e.~of all the nodes in the visualization of the functor term
$F$ (\autoref{fig:composedFunctor}). Then the components of the
refinement interfaces of the four involved functors $\Dist$, $\Sigma$,
$\Pow$, and $\Bag$ are combined by disjoint union $+$.  The
transformation of a coalgebra $c\colon C \to FC$ into a $F'$-coalgebra
introduces a set of intermediate states for each edge in the
visualization of the term $F$ in
\autoref{fig:composedFunctor}. E.g.~$Y$ contains an intermediate
state for every $\Dist$-edge,
i.e.~$Y = \{ (x,y) \mid \mu(x)(y) \neq 0\}$. Successors of such
intermediate states in~$Y$ lie in $\N \times Z_1 \times Z_2$, and
successors of intermediate states in $Z_1$ and $Z_2$ lie in $\Pow X$
and $\Bag X$, respectively. Overall, we obtain an $F'$-coalgebra on
$X+Y+Z_1+Z_2$, whose minimization yields the minimization of the
original $F$-coalgebra. The correctness of this construction is
established in full generality
in~\cite[Section~7]{concurSpecialIssue}.

\copar moreover implements a further optimization of this procedure
that leads to fewer intermediate states in the case of polynomial
functors~$\Sigma$: Instead of putting the refinement interface
of~$\Sigma$ side by side with those of its arguments, \copar includes
a systematic procedure to combine the refinement interfaces of the
arguments of~$\Sigma$ into a single refinement interface.  For
instance, starting from $FX=\Dist(\N \times \Pow X\times \Bag X)$ as
above, a given $F$-coalgebra is thus transformed into a coalgebra for
the functor $ F'' X = \Dist X + \N \times \Pow X \times \Bag X $,
effectively inducing intermediate states in~$Y$ as above but
avoiding~$Z_1$ and~$Z_2$. 

\subsection{Implementation Details}\label{sec:imp} Our
implementation is geared towards realizing both the level of
genericity and the efficiency afforded by the abstract
algorithm. Regarding genericity, each basic functor is defined (in its
own source file) as a single Haskell data type that implements two
type classes: a class that directly corresponds to the refinement
interface given in \autoref{D:refinement-interface} with its methods
\texttt{init} and \texttt{update}, and a parser that defines the
coalgebra syntax for the functor.  This means that new basic functors
can be implemented without modifying any of the existing code, except
for registering the new type in a list of existing
functors (refinement interfaces are
  in~\href{https://git8.cs.fau.de/software/copar/tree/master/src/Copar/Functors}{\texttt{src/Copar/Functors}}).

A key data structure for the efficient implementation of the generic
algorithm are refinable partitions, which store the current partition
of the set $C$ of states of the input coalgebra during the execution
of the algorithm.  This data structure has to provide constant time
operations for finding the size of a block, marking a state and
counting the marked states in a block. Splitting a block in marked and
unmarked states must only take linear time in the number of marked
states of this block.  In \copar, we use such a data structure
described (for use in Markov chain lumping) by Valmari and
Franceschinis~\cite{ValmariF10}.

Our abstract algorithm maintains two partitions~$P,Q$ of $C$, where
$P$ is one transition step finer than $Q$; i.e.~$P$ is the partition of $C$
induced by the map $Fq \cdot c\colon C \epito FQ$, where
$q\colon C \epito Q$ is the canonical quotient map assigning to every
state the block which contains it\lsnote{This is not very clear
  and, according to Thorsten, also not quite correct. @Thorsten:
  Please reword}. The key to the low time complexity is to choose in
each iteration a \emph{subblock}, i.e.~a block $S$ in $P$ whose
surrounding \emph{compound block}, i.e. the block~$B$ in $Q$ such that
$S \subseteq B$, satisfies $2 \cdot |S| \leq |B|$, and then refine $Q$
(and $P$) as explained in \autoref{sec:refintface} (see
\autoref{fig:splitblock})\lsnote{@Thorsten: Fix autoref to abbreviate
  Fig.}. This idea goes back to Hopcroft~\cite{Hopcroft71}, and is
also used in all other partition refinement algorithms mentioned in
the introduction.  Our implementation maintains a queue of subblocks
$S$ satisfying the above property\lsnote{@Thorsten: And only one
  partition, right? Please clarify}, and the termination condition
$P = Q$ of the main loop then translates to this queue being
empty. 

One optimization that is new in \copar in relation
to~\cite{ValmariF10,concurSpecialIssue} is that weights for blocks of
exactly one state are not computed, as those cannot be split any
further. This has drastic performance benefits for inputs where the
algorithm produces many single-element blocks early on, e.g.~for
nearly minimal systems or fine grained initial partitions,
see~\cite{Deifel18} for details and measurements.

\section{Instances}
\label{sec:instances}

Many systems are coalgebras for functors composed according to the
grammar~\eqref{termGrammar2}. In \autoref{tab:instances}, we list various system
types that can be handled by our algorithm, taken
from~\cite{concur2017,concurSpecialIssue} except for weighted tree automata, which are
new in the present paper. In all cases,~$m$ is the number of edges and $n$ is
the number of states of the input coalgebra, and we compare the run time of our
generic algorithm with that of specifically designed algorithms from the
literature. In most instances we match the complexity of the best known
algorithm. In the one case where our generic algorithm is asymptotically slower
(LTS with unbounded alphabet), this is due to assuming a potentially very large
number of alphabet letters -- as soon as the number of alphabet letters is
assumed to be polynomially bounded in the number~$n$ of states, the number~$m$
of transitions is also polynomially bounded in~$n$, so $\log m\in\CO(\log n)$.
This argument also explains why `$<$' and `$=$', respectively, hold in the last
two rows of \autoref{tab:instances}, as we assume~$\Sigma$ to be (fixed and)
finite; the case where~$\Sigma$ is infinite and unranked is more complicated.
Details on the instantiation to weighted tree automata are discussed in
\autoref{sec:wta}. We comment briefly on some further instances and initial
partitions:

\newcommand{\mycite}[1]{\cite{#1}}
\begin{table}[t] \centering
  \caption{Asymptotic complexity of the generic algorithm (2017/2019) compared to
    specific algorithms, for systems with $n$ states and $m$ transitions, respectively
    $m_\Pow$ nondeterministic and $m_\Dist$ probabilistic
    transitions for Segala systems. For simplicity, we assume that $m\ge
    n$ and, like~\cite{Hopcroft71,HoegbergEA07},  that $A$, $\Sigma$
    are finite.
    \smallskip}
    \label{tab:instances}
    \setlength\heavyrulewidth{0.25ex}%
\renewcommand{\arraystretch}{1.1}%
\renewcommand{\color}[1]{}%
\begin{tabular}{@{}l@{\hspace{4mm}}l@{\hspace{3mm}}c@{\hspace{4mm}}c@{\hspace{4mm}}crr@{}}
    \toprule
    System
    & Functor
    & Run-Time
    &
    & \multicolumn{3}{@{}l@{}}{Specific algorithm (Year)}
    \\
    \toprule
  \makecell[l]{
  DFA
  }
    & $2\times(-)^A$
    & $n\cdot \log n$
    & {$\bm=$}
    & $n\cdot \log n$
    & 1971
    & \mycite{Hopcroft71}
    \\
    \midrule
    \makecell[l]{Transition\\ Systems}
    & $\Pow$
    & $m\cdot \log n$
    & {$\bm=$}
    & \makecell{
      $m\cdot \log n$ \\
    }
    & 1987
    &  \mycite{PaigeTarjan87}
    \\
    \midrule
     \multirow{2}{*}{\makecell[l]{Labelled TS}}
    & \multirow{2}{*}{$\Pow(\N\times -)$}
    & \multirow{2}{*}{\makecell{$m\cdot \log m$}}
    & {$\bm=$}
    & $m\cdot \log m$
    & 2004
    & \mycite{DovierEA04}
    \\
    & 
    & 
    & {$\bm >$}
    & $m\cdot \log n$
    & 2009
    & \mycite{Valmari09}
    \\
    \midrule
    \makecell[l]{Markov\\Chains}
    & $\R^{(-)}$
    &$m\cdot \log n$
    & {$\bm=$}
    & $m\cdot \log n$
    & 2010
    & \mycite{ValmariF10}
    \\
    \midrule
    \multirow{2}{*}{\makecell[l]{Segala Systems}}
    & \multirow{2}{*}{\makecell[l]{$\Pow(A\times -)\cdot \mathcal{D}$}}
    & \multirow{2}{*}{\makecell{$m_\Dist \cdot \log m_\Pow$}}
    & {$\bm <$}
    & $m\cdot \log n$
    & 2000
    & \mycite{BaierEM00}
    \\
    & 
    & 
    & {$\bm =$}
    & \makecell{$m_\Dist \cdot \log m_\Pow$}
    & 2018
    & \cite{GrooteEA18}
    \\
    \midrule
    \makecell[l]{Colour\\Refinement}
    & $\Bag$
    & $m\cdot \log n$
    & $\bm=$
    & $m\cdot \log n$
    & 2017
    & \cite{BerkholzBG17}
    \\
  \midrule\midrule%
  \multirow{2}{*}{\makecell[l]{
  Weighted Tree\\[-1pt] Automata
  }}
    & $M\times M^{(\Sigma (-))}$
    & $m\cdot \log^2 m$
    & $\bm<$
    & $m\cdot n$
    & 2007
    & \cite{HoegbergEA07}
      \\
    & \makecell{$M\times M^{(\Sigma (-))}$\\[-1mm] \scriptsize ($M$ cancellative)}
    & \!\!$m\cdot \log m$\!\!
    & $\bm =$
    & \makecell{$m \cdot \log n$}
    & 2007
    & \cite{HoegbergEA07}
    \\
    \bottomrule
  \end{tabular}%
  \vspace*{-10pt}
\end{table}%

\subsubsection*{Further system types} can be handled by our algorithm and
tool by combining functors in various ways. For instance, general
Segala systems are coalgebras for the functor
$\Pow\Dist(A\times (-))$, and are minimized by our algorithm in time
$\CO((m+n)\cdot \log n)$, improving on the best previous
algorithm~\cite{BaierEM00}\twnote{I was not aware that
Baier e.a.\cite{BaierEM00} mention \emph{general} segala systems. Is this really
true?}; other type functors for various species of
probabilistic systems are listed in \cite{BARTELS200357}, including
the ones for reactive systems, generative systems, stratified systems,
alternating systems, bundle systems, and Pnueli-Zuck systems.

\subsubsection*{Initial partitions:} Note that in the classical
Paige-Tarjan algorithm \cite{PaigeTarjan87}, the input includes an initial
partition. Initial partitions as input parameters are covered via the
genericity of our algorithm: Initial partitions on $F$-coalgebras are
accomodated by moving to the functor $F'X = \N\times FX$, where the
first component of a coalgebra assigns to each state the number of its
block in the initial partition. Under the optimized treatment of the
polynomial functor \mbox{$\N\times(-)$} (\autoref{sec:des}), this
transformation does not enlarge the state space and also leaves the
complexity parameter $\rifactor(n,m)$ unchanged~\cite{concurSpecialIssue};
that is, the asymptotic run time of the algorithm remains unchanged
under adding initial partitions. 
%
%

%
%

\section{Weighted Tree Automata}
\label{sec:wta}
We proceed to take a closer look at weighted tree automata as a worked
example. In our previous work, we have treated the case where the
weight monoid is a group; in the present paper, we extend this
treatment to unrestricted monoids. As indicated previously, it is this
example that mainly motivates the refinement of the run time analysis
discussed in \autoref{sec:refintface}, and we will see that in the
case of non-cancellative monoids, the generic algorithm improves on
the run time of the best known specific algorithms in the literature.

Weighted tree automata simultaneously generalize tree automata and
weighted (word) automata. A partition refinement construction for
weighted automata (w.r.t.~weighted bisimilarity) was first considered
by Buchholz~\cite[Theorem~3.7]{Buchholz08}. Högberg et al.\ first
provided an efficient partition refinement algorithm for tree
automata~\cite{HoegbergEA09}, and subsequently for weighted tree
automata~\cite{HoegbergEA07}. Generally, tree automata differ from
word automata in replacing the input alphabet, which may be seen as
sets of unary operations, with an algebraic signature~$\Sigma$:

\begin{definition}\label{D:wta}
  Let $(M,+,0)$ be a commutative monoid. A (bottom-up) \emph{weighted tree
    automaton} (WTA) (over $M$) consists of a finite set $X$ of
  states, a finite signature $\Sigma$, an output map $f\colon X\to M$,
  and for each $k\ge 0$, a transition map
  $\mu_k\colon \Sigma_k\to M^{X^k\times X}$, where $\Sigma_k$ denotes
  the set of $k$-ary input symbols in $\Sigma$; the maximum arity of
  symbols in~$\Sigma$ is called the \emph{rank}.
\end{definition}
\twnote{example?}
\noindent
A weighted tree automaton is thus equivalently a finite coalgebra for
the functor $M\times M^{(\Sigma)}$ (where
$M^{(\Sigma)}(X) = M^{(\Sigma X)}$) where $\Sigma\colon \Set\to\Set$
is a polynomial functor. Indeed, we can regard the output map as a
transition map for a constant symbol, so it suffices to consider just
the functor $M^{(\Sigma)}$ (and in fact the output map is ignored in
the notion of backward bisimulation used by Högberg et
al.~\cite{HoegbergEA07}). For weighted systems, \emph{forward} and
\emph{backward} notions of bisimulation are considered in the
literature~\cite{Buchholz08,HoegbergEA07}; we do not repeat the
definitions here but focus on backward bisimulation, as it
corresponds to behavioural equivalence:
\begin{proposition}\label{wtaBackward} Backward bisimulation of
  weighted tree automata coincides with behavioural equivalence of
  $M^{(\Sigma)}$-coalgebras.
\end{proposition}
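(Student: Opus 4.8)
The plan is to unfold both notions until they meet at a single equation on summed weights, using the image-measure description of $M^{(-)}$ recalled in \autoref{exManyFunctors}. First I would record the dictionary between a WTA and its coalgebra $c\colon C\to M^{(\Sigma C)}$: an element of $\Sigma C=\coprod_{k}\Sigma_k\times C^k$ is a symbol $\sigma$ together with a source tuple $(x_1,\dots,x_k)$, and $c(x)(\sigma,(x_1,\dots,x_k)) = \mu_k(\sigma)((x_1,\dots,x_k),x)$ is the weight of the transition from that tuple into~$x$. As noted after \autoref{D:wta}, the output map is absorbed into a nullary symbol (or simply dropped), matching the convention of Högberg et al.

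Next I would give a workable description of behavioural equivalence. Write $q_R\colon C\epito C/R$ for the quotient by an equivalence $R$ on $C$. If $M^{(\Sigma q_R)}\cdot c$ is constant on the fibres of $q_R$ (the $R$-blocks), then, $q_R$ being surjective, it factors as $\bar c\cdot q_R$ for a unique $\bar c\colon C/R\to M^{(\Sigma(C/R))}$; this is exactly the statement that $q_R$ is a coalgebra morphism, so $R=\ker q_R\subseteq{\sim}$. Conversely, because $M^{(\Sigma(-))}$ is a finitary set functor, the minimal quotient of $(C,c)$ exists and $\sim$ is the kernel of the associated morphism $q_\sim$, whence $M^{(\Sigma q_\sim)}\cdot c$ is constant on $\sim$-blocks. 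Together these show that $\sim$ is the \emph{largest} equivalence $R$ for which $M^{(\Sigma q_R)}\cdot c$ is constant on $R$-blocks.

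The heart of the argument is then to expand this constancy condition through the image-measure formula. Since $\Sigma q_R$ acts as $\id\times q_R^{\,k}$ on the summand $\Sigma_k\times C^k$, the $\Sigma q_R$-preimage of $(\sigma,([x_1],\dots,[x_k]))$ is $\{\sigma\}\times([x_1]\times\dots\times[x_k])$, where $[x_i]$ denotes the $R$-block of $x_i$; hence
\[
  M^{(\Sigma q_R)}(c(x))\bigl(\sigma,([x_1],\dots,[x_k])\bigr)
  = \sum_{\bar x\,\in\,[x_1]\times\dots\times[x_k]} \mu_k(\sigma)(\bar x, x).
\]
Requiring this value to agree for all $R$-related $x,y$, all symbols $\sigma$, and all block-tuples is precisely Högberg et al.'s defining condition for $R$ to be a backward bisimulation, the products $[x_1]\times\dots\times[x_k]$ being exactly the classes of the componentwise lifting of $R$ to $C^k$. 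Thus the largest $R$ with this property is simultaneously behavioural equivalence and backward bisimilarity, which proves the claim.

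I expect the main obstacle to be the converse half of the middle step: that $\sim$ is realised as the kernel of a \emph{single} coalgebra morphism $q_\sim$, equivalently that the minimal quotient coalgebra exists. This is where a structural hypothesis on the functor is genuinely used: $M^{(\Sigma(-))}$ is finitary (as $\Sigma$ is a finite signature and $M^{(-)}$ is finitary), hence accessible, so minimisation exists by standard coalgebraic theory. Once this is granted, both inclusions are purely computational. It is worth stressing that zippability, though central to the \emph{algorithm}, plays no role here, since the proposition only matches two descriptions of behavioural equivalence rather than justifying its iterative computation.
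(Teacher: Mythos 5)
Your proposal is correct and follows essentially the same route as the paper's proof: both reduce the claim to showing that an equivalence relation $R$ is a backward bisimulation iff the quotient map $X\epito X/R$ is an $M^{(\Sigma(-))}$-coalgebra homomorphism, via the identical key computation $M^{(\Sigma e)}(\bar\mu(x))(\sigma(L))=\sum_{w\in L}\mu_k(\sigma)(w,x)$. You are somewhat more explicit than the paper about the final step identifying $\sim$ with the \emph{largest} such $R$ (via existence of the minimal quotient), which the paper leaves implicit; this is a welcome clarification rather than a divergence.
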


\takeout{
\begin{example}
  Weighted tree automata themselves subsume two other notions:
  \begin{enumerate}
  \item For $M$ being the Boolean monoid (w.r.t.~\emph{or}), we obtain
    non-determinstic tree automata and their bisimilarity.
  \item If all operations $\sigma \in \Sigma$ have arity one, then
    $F$-coalgebras are weighted automata.
  \end{enumerate}
\end{example}
}

\noindent Since $M^{(\Sigma)}$ is a composite of~$M^{(-)}$ and a
polynomial functor~$\Sigma$, the modularity of our approach implies
that it suffices to provide a refinement interface for~$M^{(-)}$.  For
the case where~$M$ is a group, a refinement interface with
$\rifactor(n,m) = 1$ has been given in our previous work. For the general
case, we distinguish, like Högberg et al.~\cite{HoegbergEA07},
between cancellative and non-cancellative monoids, because we obtain a better
complexity result for the former.

\subsection{Cancellative Monoids}
\label{sec:cancellative}
Recall that a commutative monoid $(M,+,0)$ is \emph{cancellative} if
$a+b = a+c$ implies $b = c$.  It is well-known that every cancellative
commutative monoid~$M$ embeds into an abelian group~$G$ via the
Grothendieck construction. Hence, we can convert $M^{(-)}$-coalgebras
into $G^{(-)}$-coalgebras and use the refinement interface for
$G^{(-)}$ from our previous work, obtaining
\begin{theorem}\label{T:cancel}
  On weighted tree automata with~$n$ states,~$k$ transitions and
  rank $r$ over a cancellative monoid, our algorithm runs in time
  \( \CO((rk + n)\cdot \log(k+n) \cdot r)\).
\end{theorem}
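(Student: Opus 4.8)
The plan is to reduce the cancellative case to the group-valued case treated in our previous work, invoke the modularity of the algorithm to decompose the functor $M^{(\Sigma)}$, carefully count the nodes and edges of the resulting graph representation, and then apply \autoref{T:comp} with the appropriate complexity factor $\rifactor(n,m)$.

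First I would embed the cancellative monoid $M$ into its Grothendieck group $G$. Since this embedding is injective, $M^{(-)}$ is a subfunctor of $G^{(-)}$, and hence $M^{(\Sigma)}$ is a subfunctor of $G^{(\Sigma)}$; by preservation and reflection of behavioural equivalence under subfunctors \cite[Proposition~2.13]{concurSpecialIssue}, it suffices to minimize the given automaton as a $G^{(\Sigma)}$-coalgebra. Writing $G^{(\Sigma)}$ as the composite of the group-valued functor $G^{(-)}$ with the polynomial functor $\Sigma$ of rank $r$, and noting that the coproduct $G^{(-)}+\Sigma$ of zippable functors is again zippable, the modularity of the construction (\autoref{sec:des}, with correctness established in \cite[Section~7]{concurSpecialIssue}) transforms the $G^{(\Sigma)}$-coalgebra on the state set $X$ (with $|X|=n$) into a coalgebra for $G^{(-)}+\Sigma$ on an extended node set $X+Y$, whose minimization yields that of the original automaton.

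Next I would count nodes and edges of this transformed coalgebra. The intermediate node set $Y$ contains one node per $G^{(-)}$-edge, i.e.\ one per transition, so $|Y|=k$ and the total number of nodes is $N=n+k$. The edges split into the $k$ edges from original states into $Y$ (one per transition) and the $\Sigma$-edges leaving each intermediate node $(x,\sigma(x_1,\dots,x_j))$ towards its $j\le r$ arguments; the latter number at most $r$ per node and hence at most $rk$ in total. Thus the transformed coalgebra has $M=\CO(rk)$ edges. It remains to determine the complexity factor: the refinement interface for $G^{(-)}$ has $\rifactor(n,m)=1$ by \cite[Examples~6.4 and~6.6]{concurSpecialIssue}, whereas the polynomial functor $\Sigma$ of rank $r$ contributes $\rifactor(n,m)=\CO(r)$ (essentially because comparisons of values in $\Sigma 3$ inspect tuples of arity up to $r$, as noted after \autoref{T:comp}); combining the two interfaces by disjoint union yields $\rifactor(n,m)=\CO(r)$ for $G^{(-)}+\Sigma$. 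Substituting $N=n+k$, $M=\CO(rk)$ and $\rifactor=\CO(r)$ into the bound $\CO((M+N)\cdot\log N\cdot\rifactor)$ of \autoref{T:comp} gives $\CO((rk+n+k)\cdot\log(n+k)\cdot r)=\CO((rk+n)\cdot\log(k+n)\cdot r)$, as claimed (using $k\le rk$ for $r\ge 1$).

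The main obstacle is the bookkeeping in the two middle steps: one must match the automaton parameters $n,k,r$ to the generic parameters of \autoref{T:comp}, and in particular track the factor $r$ in the \emph{two} independent places where it enters — the edge count $M=\CO(rk)$ and the interface cost $\rifactor=\CO(r)$ — taking care neither to drop nor to double-count it. A secondary point to justify carefully is that the disjoint-union combination of the interfaces for $G^{(-)}$ and $\Sigma$ indeed incurs cost equal to the maximum of the summands' costs, so that the combined $\rifactor$ is $\CO(r)$ rather than a product.
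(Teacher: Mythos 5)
Your proposal is correct and follows essentially the same route as the paper's own proof: Grothendieck embedding of $M$ into a group $G$ to reuse the group-valued refinement interface, decomposition of the composite functor into a coproduct with intermediate states $Y$ (one per transition, so $n'=n+k$ and $m'\le (r+1)k$), and application of \autoref{T:comp} with $\rifactor=\max(1,r)=r$ obtained as the maximum of the summands' factors. The node/edge bookkeeping and the justification that the combined interface costs the maximum (since each node calls exactly one of the two interfaces) match the paper's argument.
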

\noindent
Note that $rk$ may be replaced with the number $m$ of edges of the
corresponding coalgebra. Thus, for a fixed signature and $m \geq n$,
we obtain the bound in \autoref{tab:instances}.

\subsection{Non-cancellative Monoids}\label{sec:noncan}
The refinement interface for $G^{(-)}$ for a group $G$ (in which the cancellative
monoid $M$ in \autoref{sec:cancellative} is embedded) crucially makes
use of inverses for the fast computation of the weights returned by
$\op{update}$. For a non-cancellative monoid~$(M,+,0)$, we instead
need to maintain bags of monoid elements and consider subtraction of
bags. For the encoding of $M^{(-)}$, we take labels
$A=M_{\neq 0}=M\setminus\{0\}$, and
$\flat(f) = \{\,(f(x), x) \mid x\in X, f(x)\neq 0\,\}$ for
$f\in M^{(-)}$. The refinement interface for $M^{(-)}$ has weights
$W = M\times\Bagf(M_{\neq 0})$ and
\[
  w(B)(f) = \textstyle\big(\sum_{x\in X\setminus B}f(x),\;
  (m\mapsto\big|\{ x\in B\mid f(x) = m\}\big|) \big)\in M\times\Bagf(M_{\neq 0});
\]
that is, $w(B)(f)$ returns the total weight of $X\setminus B$
under~$f$ and the bag of non-zero elements of~$M$ occurring in
$f$.  The interface functions
$\op{init}\colon M^{(1)}×\Bag M_{\neq 0}\to W$,
$\op{update}\colon \Bag M_{\neq 0} × W \to W× M^{(3)}× W$ are
\begin{align*}
  \op{init}(f, \ell) &= (0, \ell)\\
  \op{update}(\ell, (r, c)) &=
  ((r + \Sigma(c-\ell), \ell),
  (r, \Sigma(c-\ell), \Sigma(\ell)),
  (r + \Sigma(\ell), c - \ell)),
\end{align*}
where for $a, b\in \Bagf Y$, the bag $a-b$ is defined by
$(a-b)(y) = \max(0, a(y) - b(y))$; $\Sigma\colon\Bagf M\to M$ is the
canonical summation map defined by
$\Sigma(b) = \sum_{m \in M} b(m) \cdot m$; and we denote elements of
$M^{(3)}$ as triples over~$M$.

We implement the bags $\Bagf(M_{\neq 0})$ used in
$W=M\times\Bagf(M_{\neq 0})$ as balanced search trees with keys
$M_{\neq 0}$ and values $\N$, following Adams~\cite{Adams93}. In
addition, we store in every node the value $\Sigma(b)$, where $b$ is
the bag encoded by the subtree rooted at that node. Hence, for every
bag $b$, the value $\Sigma(b)$ is immediately available at the root
node of the search tree encoding $b$. It is not difficult to see that
maintaining those values in the nodes only adds a constant overhead
into the operations of our data structure for bags and that the size
of the search trees is bounded by $\min(|M|,m)$. Thus, we obtain:
\begin{proposition}\label{P:monoids-fast}
  The above function $\op{update}(\ell, (r,c))$ can be computed in
  $\CO(|\ell|\cdot\log\min(|M|, m))$, where $m$ is the number of all
  edges of the input coalgebra.
\end{proposition}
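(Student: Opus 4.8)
The plan is to read off the cost of each operation appearing in the definition of $\op{update}(\ell,(r,c))$ and to check that all of them fit the budget $\CO(|\ell|\cdot\log\min(|M|,m))$. The weight $c$ already arrives as one of the augmented balanced search trees described above, and $r$ is a single monoid element; the quantities that must actually be produced are a tree for the bag $\ell$, the tree for $c-\ell$, and the monoid elements $\Sigma(\ell)$, $\Sigma(c-\ell)$, $r+\Sigma(\ell)$ and $r+\Sigma(c-\ell)$, which then assemble into the output triple (note that $F3 = M^{(3)}$ is delivered simply as the triple $(r,\Sigma(c-\ell),\Sigma(\ell))$).

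First I would turn the input $\ell\in\Bag M_{\neq 0}$ into an augmented search tree by inserting its $|\ell|$ labels one at a time. Since the number of \emph{distinct} labels in $\ell$ is bounded by the number of distinct nonzero monoid values and by $|\ell|\le m$, hence by $\min(|M|,m)$, each insertion costs $\CO(\log\min(|M|,m))$, and this phase runs in $\CO(|\ell|\cdot\log\min(|M|,m))$. Because every node stores the sum of the bag encoded by its subtree, the value $\Sigma(\ell)$ is then read off the root in constant time.

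Next I would compute $c-\ell$ by iterating over the distinct keys of $\ell$ and, for each key $y$ with multiplicity $\ell(y)$, performing one lookup-and-update in the tree for $c$: the stored count $c(y)$ is replaced by $\max(0,c(y)-\ell(y))$, the node being deleted when this value drops to $0$ (keys absent from $c$ require no action, as clamping leaves them at $0$). There are at most $|\ell|$ such operations, each costing $\CO(\log\min(|M|,m))$, so this phase also stays within the budget; afterwards $\Sigma(c-\ell)$ is read off the root of the resulting tree. The two remaining monoid additions $r+\Sigma(\ell)$ and $r+\Sigma(c-\ell)$ and the assembly of the output are $\CO(1)$ and are therefore absorbed.

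The one point that needs care—and which I would spell out explicitly—is the claim that maintaining the augmented subtree sums costs only a constant factor over the bare tree operations. Along any insertion or deletion path the affected nodes have their stored sum recomputed bottom-up from the (at most two) sums of their children, and each rebalancing rotation touches only $\CO(1)$ nodes whose sums are likewise recomputed in $\CO(1)$ time each; hence the augmentation does not change the $\CO(\log(\text{tree size}))$ cost of a single tree operation. Combining this with the established bound $\min(|M|,m)$ on the tree size yields the stated running time $\CO(|\ell|\cdot\log\min(|M|,m))$.
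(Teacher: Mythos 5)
Your proposal is correct and follows essentially the same route as the paper: augmented balanced search trees of size at most $\min(|M|,m)$ storing subtree sums, with the observation that maintaining those sums under rotations and path-retracing adds only constant overhead, so that $\op{update}$ reduces to $\CO(|\ell|)$ tree operations each costing $\CO(\log\min(|M|,m))$. Your explicit decomposition of $\op{update}$ into building the tree for $\ell$ and performing $|\ell|$ lookup-and-update steps on $c$ is a slightly more detailed account of what the paper leaves implicit, but the underlying argument is the same.
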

\begin{corollary}\label{C:ncancel}
  On a weighted tree automaton with $n$ states, $k$ transitions, and
  rank~$r$ over an (unrestricted) monoid~$M$, our algorithm runs in
  time \(\CO\big((rk +n)\cdot \log(k+n) \cdot (\log k + r)\big)\),
  respectively \(\CO((rk + n) \cdot \log(k+n) \cdot r)\)
  if~$M$ is finite.
\end{corollary}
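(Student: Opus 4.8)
The plan is to read off \autoref{C:ncancel} from three ingredients that are already in place: the modular decomposition of \autoref{sec:des}, the update-time bound of \autoref{P:monoids-fast}, and the generic complexity estimate \autoref{T:comp}. The only genuine work is the bookkeeping that expresses the size $n',m'$ of the coalgebra actually processed by the algorithm, together with the factor $\rifactor$, in terms of the automaton parameters $n$, $k$, and $r$.

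First I would recall that, after folding the output map into a fresh nullary operation symbol as in the discussion preceding \autoref{wtaBackward}, a WTA over $M$ is exactly a finite coalgebra for $M^{(\Sigma)}$, a composite of the monoid-valued functor $M^{(-)}$ and the polynomial functor $\Sigma$ of rank $r$. Applying the modular construction of \autoref{sec:des} (whose correctness is established in \cite[Section~7]{concurSpecialIssue}) turns this into a coalgebra for the sum functor $F' = M^{(-)} + \Sigma$. This functor is zippable, being a coproduct of the zippable functors $M^{(-)}$ and $\Sigma$ \cite[Lemma~5.4]{concurSpecialIssue}, and it carries the combined refinement interface obtained by placing the interface for $M^{(-)}$ (\autoref{sec:noncan}) next to that of the polynomial functor~$\Sigma$. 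Since on each state of the transformed coalgebra exactly one of the two component operations is invoked, both hypotheses of \autoref{T:comp} hold for $F'$ with $\rifactor$ taken to be the pointwise maximum of the two component factors.

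Next I would carry out the size count. The transformation introduces one intermediate state per non-zero entry of $c$, i.e.\ one per transition and one per (non-zero) output, so there are $\CO(n+k)$ intermediate states and hence $n' = \CO(n+k)$; the weighted $M^{(-)}$-edges from original to intermediate states likewise number $\CO(n+k)$, and each non-output intermediate state emits at most $r$ structural $\Sigma$-edges, contributing $\le rk$ further edges. Thus $m' = \CO(rk+n)$, so that $m'+n' = \CO(rk+n)$ and $\log n' = \CO(\log(k+n))$. For the two component factors I would use $\rifactor_\Sigma = \CO(r)$ for a polynomial functor of rank $r$ (as in the remark following \autoref{T:comp}); and, for $M^{(-)}$, the bound of \autoref{P:monoids-fast}, observing that each weight $w(B)(c(x))\in M\times\Bagf(M_{\neq 0})$ is a bag over the at most $\deg(x)=\CO(k)$ non-zero weights occurring on the edges out of the single state $x$, so the search tree encoding it has size $\CO(\min(|M|,k))$. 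Hence $\op{update}$ runs within $\CO(|\ell|\cdot\log\min(|M|,k))$ and the required comparisons (of $M^{(3)}$-triples) are constant-time, giving $\rifactor_{M^{(-)}} = \CO(\log\min(|M|,k))$.

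Taking the maximum yields $\rifactor = \CO(r+\log\min(|M|,k))$, which is $\CO(r+\log k)$ in general and $\CO(r)$ when $M$ is finite, since then $\min(|M|,k)$ is a constant. Substituting $m'+n' = \CO(rk+n)$, $\log n' = \CO(\log(k+n))$ and this value of $\rifactor$ into \autoref{T:comp} gives the two claimed bounds. I expect the main obstacle to be exactly the accounting in the third step: one must justify the \emph{per-state} search-tree bound $\min(|M|,k)$ (rather than the cruder global bound $\min(|M|,m')$ of \autoref{P:monoids-fast}) in order to obtain the clean factor $\log k$, and one must check that the nullary encoding of the output map together with the polynomial optimization of \autoref{sec:des} does not inflate $n'$ or $m'$ beyond $\CO(n+k)$ and $\CO(rk+n)$, respectively.
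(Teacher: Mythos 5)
Your proposal is correct and follows essentially the same route as the paper: decompose via \autoref{sec:des} into a coalgebra with $n'=n+k$ states and $m'\le(r+1)k$ edges, take $\rifactor=\max(\rifactor_\Sigma,\rifactor_{M^{(-)}})$ with $\rifactor_\Sigma=\CO(r)$ and $\rifactor_{M^{(-)}}$ given by \autoref{P:monoids-fast}, and apply \autoref{T:comp}. The one obstacle you anticipate is not actually needed: the paper simply uses the global bound $\log\min(|M|,m')\le\log m'=\log((r+1)k)=\log(r+1)+\log k\in\CO(\log k+r)$, so the cruder bound of \autoref{P:monoids-fast} already yields the stated factor without any per-state search-tree accounting.
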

\noindent
More precisely, the analysis using Theorem~\ref{T:comp} shows that
$rk$ can be replaced with the number $m$ of edges of the input
coalgebra. Assuming $m \geq n$ we thus obtain the
bound given in \autoref{tab:instances}.
In addition to guaranteeing a good theoretical complexity, our tool
immediately yields an efficient implementation. For the case of
non-cancellative monoids, this is, to the best of our knowledge, the
only available implementation of partition refinement for weighted
tree automata.

\subsection{Evaluation and Benchmarking}\label{sec:bench}
We report on a number of benchmarks that illustrate the practical
scalability of our algorithm instantiated for weighted tree
automata. Previous studies on the practical performance of partition
refinement on large labelled transition
systems~\cite{valmari2010simple,Valmari09} show that memory rather
than run time seems to be the limiting factor. Since labelled
transition systems are a special case of weighted tree automata, we
expect to see similar phenomena. Hence, we evaluate the maximal
automata sizes that can be processed on a typical current computer
setup: We randomly generate weighted tree automata for various
signatures and monoids, looking for the maximal size of WTAs that can
be handled with 16 GB of RAM, and we measure the respective run times
of our tool, compiled with GHC version 8.4.4 on a Linux
  system and executed on an Intel® Core™ i5-6500 processor with
  3.20GHz clock rate.
\begin{table}[t] \centering
  \caption{Processing times (in seconds) $t_p$ for parsing and $t_a$ for partition refinement on maximal weighted tree automata with~$n$ 
    states  and $50\cdot n$ random transitions fitting into 16 GB of memory. File sizes range from 117 MB to
    141 MB, and numbers~$m$ of edges from 11 million to 17 million. }%
  \label{tab:maxfilesize}
  \twnote{TODO: check the numbers again}
  \smallskip
\setlength{\tabcolsep}{2.2pt}
  \begin{tabular}{@{}lclrrclrrclrrclrrclrr@{}}
    \toprule
    $\Sigma X\mathrlap{\ =}$
    &$\,$& \multicolumn{3}{@{}c@{}}{$4\!\times\!X$}
    && \multicolumn{3}{@{}c@{}}{$4\!\times\!X^2$}
    && \multicolumn{3}{@{}c@{}}{$4\!\times\!X^3$}
    && \multicolumn{3}{@{}c@{}}{$4\!\times\!X^4$}
    && \multicolumn{3}{@{}c@{}}{$4\!\times\!X^5$}
      \\
    \cmidrule{3-5}
    \cmidrule{7-9}
    \cmidrule{11-13}
    \cmidrule{15-17}
    \cmidrule{19-21}
    $M$
    && \multicolumn{1}{@{}c@{}}{$n$} & \multicolumn{1}{@{}c@{}}{$t_p$} & \multicolumn{1}{@{}c@{}}{$t_a$}
    && \multicolumn{1}{@{}c@{}}{$n$} & \multicolumn{1}{@{}c@{}}{$t_p$} & \multicolumn{1}{@{}c@{}}{$t_a$}
    && \multicolumn{1}{@{}c@{}}{$n$} & \multicolumn{1}{@{}c@{}}{$t_p$} & \multicolumn{1}{@{}c@{}}{$t_a$}
    && \multicolumn{1}{@{}c@{}}{$n$} & \multicolumn{1}{@{}c@{}}{$t_p$} & \multicolumn{1}{@{}c@{}}{$t_a$}
    && \multicolumn{1}{@{}c@{}}{$n$} & \multicolumn{1}{@{}c@{}}{$t_p$} & \multicolumn{1}{@{}c@{}}{$t_a$}
    \\
    \midrule
    $2$
    && 132177 & 53 & 188
    &&  98670 & 46 & 243
    &&  85016 & 47 & 187
    &&  59596 & 41 & 146
    &&  49375 & 38 & 114
    \\
    $\N$
&& 113957 & 61 & 141
&&  92434 & 55 & 175
&&  69623 & 49 & 152
&&  57319 & 47 & 140
&&  48962 & 45 & 112
    \\
    $2^{64}$
&& 114888 & 58 & 100
&&  95287 & 54 & 138
&&  70660 & 49 & 107
&&  62665 & 48 & 92
&&  49926 & 44 & 72
    \\
    \bottomrule
  \end{tabular}
  \vspace*{-15pt}
\end{table}
We fix $|\Sigma|=4$ and evaluate all combinations of rank~$r$ and
weight monoid~$M$ for $r$ ranging over $\{1,\dots, 5\}$ and $M$ over
$2=(2,\vee,0)$, $\N=(\N,\max,0)$ (the additive monoid of the tropical
semiring), and
$2^{64}=(2,\vee,0)^{64}\cong (\Powf(64),\cup,\emptyset)$. We write~$n$
for the number of states, $k$ for the number of transitions, and~$m$
for the number of edges in the graphical presentation; in fact, we
generate only transitions of the respective maximal rank~$r$, so
$m=k(r+1)$.  \autoref{tab:maxfilesize} lists the maximal values of~$n$
that fit into the mentioned 16~GB of RAM when $k= 50\cdot n$, and
associated run times.  For $M=(2,\vee,0)$, the optimized refinement
interface for $\Powf$ needs less memory, allowing for higher values
of~$n$, an effect that decreases with increasing rank~$r$. We restrict
to generating at most 50 different elements of $M$ in each automaton,
to avoid situations where all states are immediately distinguished in
the first refinement step. In addition, the parameters are
chosen so that with high likelihood, the final partition distinguishes
all states, so the examples illustrate the worst case. The first
refinement step produces in the order of $|\Sigma|\cdot \min(50, |M|)^r$
subblocks (cf.~\autoref{sec:imp}), implying earlier termination for
high values of~$|M|$ and~$r$ and explaining the slightly longer
run time for $M=(2,\vee,0)$ on small~$r$. We note in summary that WTAs with well
over $10$ million edges are processed in less than five minutes, and
in fact the run time of minimization is of the same order of magnitude
as that of input parsing. Additional evaluations on DFAs, Segala Systems, and
benchmarks for the Prism model checker~\cite{prismbenchmarks}, as well as a
comparison with existing specific tools by Antti Valmari~\cite{ValmariF10} and
from the mCRL2 toolset~\cite{BunteEA19} are in the full
version of this paper~\cite{DeifelEA19}.

\section{Conclusion and Future Work}\label{section:conclusion}
\label{sec:conc}

\takeout{
Future: 
\begin{itemize}
\item optimize refinement interfaces so that for functors that do not need
  weights, those don't need to be implemented and stored.
\item we are slower than Valmari for labelled transition systems
  because he can use bucket type of sorting (one bucket per label) during the splitting
  step. It remains to be seen whether and how this can be incorporated
  in our algorithm and implementation.
\item see CONCUR paper
\end{itemize}}

\noindent We have instantiated a generic efficient partition
refinement algorithm that we introduced in recent
work~\cite{concur2017,concurSpecialIssue} to weighted (tree) automata,
and we have refined the generic complexity analysis of the algorithm
to cover this case. Moreover, we have described an implementation of the
generic algorithm in the form of the tool \copar, which supports the
modular combination of basic system types without requiring any
additional implementation effort, and allows for easy incorporation of
new basic system types by implementing a generic refinement interface.

In future work, we will further broaden the range of system types that
our algorithm and tool can accomodate, and provide support for base
categories beyond sets, e.g.~nominal sets, which underlie nominal
automata~\cite{BojanczykEA14,SchroderEA17}.

\enlargethispage{2pt}
Concerning genericity there is an orthogonal approach by Ranzato and
Tapparo~\cite{RanzatoT08} that is generic over \emph{notions of
  process equivalence} but fixes the system type to standard labelled
transition systems; see also~\cite{DBLP:journals/tocl/GrooteJKW17}. Similarly, Blom and
Orzan~\cite{BlomOrzan03,BlomOrzan05} present \emph{signature
  refinement}, which covers, e.g.~strong and branching bisimulation
as well as Markov chain lumping, but requires adapting the algorithm
for each instance. These algorithms have also been improved using
symbolic techniques (e.g.~\cite{vDvdP18}). Moreover, many of the mentioned approaches and
others~\cite{BergaminiEA05,BlomOrzan03,BlomOrzan05,GaravelHermanns02,vDvdP18}
focus on parallelization. We will explore in future work whether
symbolic and distributed methods can be lifted to coalgebraic
generality. A further important aim is genericity also along the axis
of process equivalences.


\takeout{
\noindent We have presented a tool that efficiently minimizes systems
w.r.t.~coalgebraic behavioural equivalence. Both the algorithm and its
implementation are highly generic, and thus provide a tool that can be
used off-the-shelf as an efficient minimizer for many different types
of state-based systems, e.g.~deterministic automata, weighted tree
automata, ordinary, weighted, and probabilistic transition systems,
and Segala systems (which combine nondeterminism and probabilistic
choice). Our tool can easily be extended to accomodate new system
types, either by combining existing basic types by composition and
polynomial constructions, or by implementing a simple functor
interface for a new basic system type. Remarkably, there are instances
where the generic algorithm is asymptotically faster than the best
algorithms in the literature, notably on weighted systems over
non-cancellative weight monoids.

Future development of the generic algorithm will include further
broadening of the class of functors covered, in particular with a view
to neighbourhood systems, whose coalgebraic type functor fails to be
zippable. Another important extension is to include support for base
categories (of state spaces) beyond sets, with nominal sets (the base
category for nominal automata~\cite{BojanczykEA14,SchroderEA17}) as a
particularly promising candidate.

Several optimizations of the current implementation can be envisaged;
e.g.\ the implementation could be extended to detect when the set of
weights is trivial and simplify the refinement interface.
}

\bibliography{refs}
\bibliographystyle{myabbrv}
%
%
\clearpage
\appendix

\usetikzlibrary{shapes}
\tikzstyle{lambdatree}=[
    level distance = 5mm,
    every node/.style={
        inner sep=1pt,
        minimum width=4mm,
        minimum height=4mm,
        draw=black,
        shape=circle,
        outer sep=0pt,
    },
    noedge/.style={
        edge from parent/.style={}
    },
    child anchor=north,
    level 1/.style={sibling distance=0mm, level distance=5mm}, 
    level 2/.style={sibling distance=11mm, level distance=4mm},
    level 3/.style={sibling distance=9mm, level distance=6mm}, 
]

\tikzstyle{root}=[
draw=none,
minimum width=1mm,
minimum height=1mm,
]
\tikzstyle{subtree}=[
  isosceles triangle,
  draw,
  shape border rotate=90,
  isosceles triangle stretches=true,
  isosceles triangle apex angle=50,
  minimum height=8mm,
  minimum width=8mm,
  inner sep=2pt,
  yshift={0mm},
  anchor=north,
]
\tikzstyle{only math nodes}=[%
  every node/.append style={%
   execute at begin node=$,%
   execute at end node=$%
  }%
]
\chapter*{Appendix}
\makeatletter
\def\tableofcontents{\section*{\contentsname\@mkboth{{\contentsname}}%
                                                    {{\contentsname}}}
 \def\authcount##1{\setcounter{auco}{##1}\setcounter{@auth}{1}}
 \def\lastand{\ifnum\value{auco}=2\relax
                 \unskip{} \andname\
              \else
                 \unskip \lastandname\
              \fi}%
 \def\and{\stepcounter{@auth}\relax
          \ifnum\value{@auth}=\value{auco}%
             \lastand
          \else
             \unskip,
          \fi}%
 \@starttoc{toc}\if@restonecol\twocolumn\fi}
\makeatother
\tableofcontents
\section{Omitted Proofs}

\subsection{Proof of \autoref{T:comp}}
The case where $\rifactor(n,m) = 1$ is proved in
\cite[Theorem~6.22]{concurSpecialIssue}. We reduce the general case to
this one as follows. Observe that the previous complexity analysis
counts the number of basic operation performed by the algorithm
(e.g.~comparisons of values of type $W$) including those performed by
$\op{init}$ and $\op{update}$. In that analysis $\op{init}$ and
$\op{update}$ were assumed to have run time in $\CO(|\ell|)$, and
the total number of basic operations of the algorithm is then in
$\CO((m+n) \cdot \log n)$.

The number of steps taken by our algorithm under the current
assumptions~\ref{T:comp:1} and~\ref{T:comp:2} is thus bounded above by
the run time of the algorithm under the above assumptions
(i.e.~assuming $\rifactor(n,m) = 1$) but assuming that every basic operation
takes $\rifactor(n,m)$ steps. Hence, clearly the overall run time lies in
$\CO((m+n) \cdot \log n \cdot \rifactor(n,m))$. \qed


\subsection{Proof of  \autoref{wtaBackward}}
Given a weighted tree automaton $(X, f, (\mu_k)_{k \in \N})$ as in
\autoref{D:wta} we see that it is, equivalently, a finite coalgebra
for the functor $FX = M \times M^{(\Sigma X)}$, where we identify the
signature $\Sigma$ with its
corresponding polynomial functor $X\mapsto \coprod_{\sigma/k \in
  \Sigma} X^k$. Indeed $(\mu_{k})_{k\ge 0}$ is equivalently
expressed by a map
\begin{equation}
  \bar \mu \colon X\to M^{(\Sigma X)}
  \quad
  \text{with}
  \quad
  \bar \mu (x)(\sigma(x_1,\ldots,x_k))
  := \mu_k(\sigma)((x_1,\ldots,x_k), x).
  \label{eqMuBar}
\end{equation}
Thus we obtain a coalgebra
\begin{equation}\label{coalgWTADef}
  c\colon X\to M\times M^{(\Sigma X)}
  \quad
  \text{with}
  \quad
  c(x) = (f(x), \bar \mu(x)).
\end{equation}

Note that $\bar \mu(x)$ is finitely supported because $X$ and $\Sigma
X$ are finite.
\begin{definition}[Högberg et al.~{\cite[Definition~16]{HoegbergEA07}}]
  A \emph{backward bisimulation} on a weighted tree automaton
  $(X, f, (\mu_k)_{k \in \N})$ is an equivalence relation
  $R\subseteq X \times X$ such that for every $(p,q) \in R$, $\sigma/k\in \Sigma$, and
  $L\in \{D_1\times\cdots\times D_k\mid D_1,\ldots,D_k\in X/R\}$:
  \[
    \sum_{w\in L}
    \mu_k(\sigma)(w,p)
    =
    \sum_{w\in L}
    \mu_k(\sigma)(w,q).
  \]
\end{definition}
\begin{remark}
  Note that $w \in L$ means that $w\in X^k$ such that $e^k(w) = L$, where
  $e^k\colon X^k\epito (X/R)^k$ is the $k$-fold power of the canonical
  quotient map $e\colon X \to X/R$.
\end{remark}

\subsection{Details for \autoref{sec:cancellative}}
Every submonoid of a group is cancellative, for example $(\N,+,0)$ and
$(\Z,\cdot,1)$. Conversely via the \emph{Grothendieck construction},
every cancellative commutative monoid $(M,+,0)$ can be embedded into
the following group:
\[
  G = (M\times M)/\mathord{\equiv}
  \qquad
  (a_+,a_-)
  \equiv
  (b_+,b_-)
  ~\text{iff}~
  a_++b_- = a_++b_-,
\]
where the group structure is given by the usual componentwise
addition on the product:
$[(a_+,a_-)] + [(b_+,b_-)] = [(a_++b_+,a_-+b_-)]$, and
$-[(a_+,a_0)] = [(a_-,a_+)]$. The embedding of $M$ is given by
$m\mapsto [(m,0)]$.

The embedding $M\rightarrowtail G$ extends to a monic natural transformation
$M^{(-)}\rightarrowtail G^{(-)}$, and therefore, computing behavioural
equivalence for $M^{(-)}$ reduces to that of
$G^{(-)}$~\cite[Proposition~2.13]{concurSpecialIssue}.

\subsection{Proof of \autoref{wtaBackward}}
  We need to show that an
  equivalence relation $R\subseteq X \times X$ is a backward
  bisimulation iff the canonical quotient map
  $e\colon X\twoheadrightarrow X/R$ is an $M^{(\Sigma(-))}$-coalgebra
  homomorphism (with domain $(X,c)$ as defined in
  \eqref{coalgWTADef}). First, let $x\in X, \sigma/k\in \Sigma$, and
  $L\in (X/R)^k$ for some equivalence relation $R$. Then we have the
  following equalities, where note that 
  $M^{(\Sigma e)}\colon M^{(\Sigma X)}\to M^{(\Sigma(X/R))}$ and
  $\sigma(L) \in \Sigma(X/R)$:
  \begin{align*}
    \sum_{w\in L} \mu_k(\sigma)(w,x)
    &= \sum_{\mathclap{\substack{w\in X^k\\e^k(w) =L}}} \mu_k(\sigma)(w,x) =
    \sum_{\mathclap{\substack{\sigma(w)\in \Sigma X\\\op{ar}(\sigma) = k\\
          \Sigma e(\sigma(w)) =\sigma(L)}}} \mu_k(\sigma)(w,x) =
    \sum_{\mathclap{\substack{\sigma(w)\in \Sigma X\\
          \Sigma e(\sigma(w)) =\sigma(L)}}} \bar \mu (x)(\sigma(w))
    \\ & =
    \sum_{\mathclap{\substack{t\in \Sigma X\\
          \Sigma e(t) =\sigma(L)}}} \bar \mu (x)(t)
    = M^{(\Sigma e)}(\bar\mu(x))(\sigma(L)).
  \end{align*}
  Hence, for every equivalence relation $R\subseteq X\times X$ we have the
  following chain of equivalences:
  \begin{align*}
    &\text{$R$ is a backward bisimulation}
    \\
    \Leftrightarrow~
    & \forall (p,q)\in R, \sigma/k\in \Sigma, L\in (X/R)^k\colon
    \sum_{w\in L} \mu_k(\sigma)(w,p) =\sum_{w\in L} \mu_k(\sigma)(w,q)
    \\
    \Leftrightarrow~
    & \forall (p,q)\in R, \sigma/k\in \Sigma, L\in (X/R)^k\colon
    \\
    & \qquad\qquad
    M^{(\Sigma e)}(\bar\mu(p))(\sigma(L))
    = M^{(\Sigma e)}(\bar\mu(q))(\sigma(L))
    \\
    \Leftrightarrow~
    & \forall (p,q)\in R, t\in \Sigma(X/R)\colon
      M^{(\Sigma e)}(\bar\mu(p))(t)
      = M^{(\Sigma e)}(\bar\mu(q))(t)
      \\
    \Leftrightarrow~
    & \forall (p,q)\in R\colon
      M^{(\Sigma e)}(\bar\mu(p)) = M^{(\Sigma e)}(\bar\mu(q))
      \\
    \Leftrightarrow~
    & \forall (p,q)\in R\colon
      (M^{(\Sigma e)}\cdot\bar\mu)(p) = (M^{(\Sigma
        e)}\cdot\bar\mu)(q). 
    \end{align*}
    The last equation hold precisely if $e$ is a coalgebra
    homomorphism. Indeed, those equations holds precisely if there
    exists a map $r\colon X/R\to M^{(\Sigma(X/R))}$ such that $r(e(x))
    = (M^{(\Sigma e)}\cdot\bar\mu)(x)$, i.e.~such that the following
    square commutes
    \[
      \begin{tikzcd}[row sep = 7mm,baseline=(B.base)]
        X
        \arrow{r}{\bar \mu}
        \arrow[->>]{d}[swap]{e}
        &
        M^{(\Sigma X)}
        \arrow[->>]{d}{M^{(\Sigma e)}}
        \\
        X/R
        \arrow[dashed]{r}{r}
        &
        |[alias=B]| M^{(\Sigma (X/R))}
      \end{tikzcd}
      \tag*{\qed}
    \]

\takeout{
\subsection{Notes on Cancellative Monoids}\smnote[inline]{Do we still need this
  text?}

\smnote[inline]{This is old text, no?}
Recall that a commutative monoid $(M,+,0)$ is \emph{cancellative} iff $a+b =
a+c$ implies $b=c$. The Grothendieck group construction for cancellative monoids
defines the abelian group $G:=(M\times M)/\mathord{\sim}$, where $\sim$ is the equivalence
relation given by
\[
  (a,b) \sim (a',b') \quad\text{ iff }\quad
  a+b' = a'+b
\]
The transitivity of $\sim$ follows from the monoid being cancellative. The group
addition is given component-wise, the inverse of $[(a,b)]$ is $[(b,a)]$, and the
inclusion $h\colon M\hookrightarrow G$, $h(a) = [(a,0)]$ is a monoid
homorphism.
}

\subsection{Proof of \autoref{T:cancel}}
The functor $FX = M \times M^{(\Sigma(-))}$ is decomposed into $F''X =
M \times M^{(X)} + \Sigma X$ according to \autoref{sec:des}. Given a
coalgebra $\langle o,w\rangle\colon X \to M \times M^{(\Sigma X)}$ we
introduce the set of intermediate states $Y$, one for every outgoing
transition from every $x \in X$, i.e.
\[
  Y = \{(x,s) \mid \text{$x \in X$ and $s \in \Sigma X$ with $w(x,s)
    \neq 0$}\}.
\]
The given coalgebra structure $\langle o,w\rangle$ yields the two
evident maps $\xi_1\colon X \to M \times M^{(Y)}$ and $\xi_2\colon Y
\to \Sigma X$ given by
\begin{align*}
  \xi_1(x) &= (o(x), t_x), \quad 
  \text{where $t_x(x',s) = 
    \begin{cases}
      w(x,s) & \text{if $x' = x$} \\
      0 & \text{else,}
    \end{cases}$}
  \\
  \xi_2(x,s) &= s.
\end{align*}
Partition refinement is now performed on the following
$T''$-coalgebra:
\[
  X + Y \xrightarrow{\xi_1 + \xi_2} M \times M^{(Y)} + \Sigma X
  \xrightarrow{\id_M \times M^{(\inr)} + \Sigma\inl} M \times M^{(X+Y)} + \Sigma(X+Y),
\]
where $\inl\colon X \to X + Y$ and $\inr\colon Y \to X+Y$ denote the evident injections into the
disjoint union.

Clearly, we have $n' := |X + Y| = n + k$ and the number of edges of the
above coalgebra is at most $m' := (r+1) \cdot k$. Since the refinement interface for
$T''$ is a combination of those of $M\times M^{(-)}$ and $\Sigma$ its
factor $\rifactor(n',m')$ is the maximum of the factors $\rifactor_M(n',m')$ and
$\rifactor_\Sigma(n',m')$ of those two
refinement interfaces, respectively, since we either call the former or the latter
one for a state in $X+Y$; in symbols:
\[
  \rifactor(n',m') = \max(\rifactor_M(n',m'),\rifactor_\Sigma(n',m')) = \max (1,r),
\]
where we use that $M$ is cancellative and $\rifactor_\Sigma(n',m') = r$ (see
the end of \autoref{sec:refintface}).

By \autoref{T:comp} we thus obtain an overall time complexity of
\[
  \CO((m'+n') \cdot \log n' \cdot r)
  =
  \CO((r \cdot k + n)\cdot \log (n+k) \cdot r).
  \tag*{\qed}
\]

\subsection{Details for \autoref{sec:noncan}}

\begin{proposition}
  The refinement interface for $M^{(-)}$ is correct, for every monoid
  $(M,+,0)$.
\end{proposition}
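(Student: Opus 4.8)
The plan is to take the family of weight maps $w$ exactly as displayed in \autoref{sec:noncan} and verify directly that, together with $\op{init}$ and $\op{update}$ defined there, it satisfies the two coherence equations of \autoref{D:refinement-interface}; no cleverness beyond careful bookkeeping is needed, and crucially no cancellation in $M$ is ever invoked. Throughout I fix $t=f\in M^{(X)}$ and sets $S\subseteq B\subseteq X$, and I abbreviate the bag of non-zero values of $f$ on a subset $T\subseteq X$ by $\beta_T := (m\mapsto |\{x\in T\mid f(x)=m\}|)\in\Bagf(M_{\neq 0})$, so that $w(B)(f) = (\sum_{x\in X\setminus B} f(x),\,\beta_B)$. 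Two elementary identities will be used repeatedly: the summation map satisfies $\Sigma(\beta_T) = \sum_{x\in T} f(x)$ (the $f(x)=0$ summands being irrelevant), and the index set decomposes as the disjoint unions $X\setminus S = (X\setminus B)\sqcup(B\setminus S)$ and $X\setminus(B\setminus S) = (X\setminus B)\sqcup S$.

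First I would check the $\op{init}$ equation. Since $\flat(f) = \{(f(x),x)\mid x\in X,\, f(x)\neq 0\}$, applying $\Bag\pi_1$ yields exactly the bag $\beta_X$, and $\op{init}(F!(f),\beta_X) = (0,\beta_X)$ by definition. On the other side $w(X)(f) = (\sum_{x\in\emptyset} f(x),\,\beta_X) = (0,\beta_X)$, so the two agree (note that $\op{init}$ simply discards its first argument $F!(f)$, which is harmless here).

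For the $\op{update}$ equation, the first argument is the multiset $\ell = \{f(x)\mid x\in S,\, f(x)\neq 0\} = \beta_S$ and the second is $w(B)(f) = (r,c)$ with $r = \sum_{x\in X\setminus B} f(x)$ and $c = \beta_B$. The key observation is that, because $S\subseteq B$, we have $\beta_B(m)\ge\beta_S(m)$ for every $m$, so the truncated bag difference coincides with the honest multiset difference: $c-\ell = \beta_B-\beta_S = \beta_{B\setminus S}$. Hence $\Sigma(c-\ell) = \sum_{x\in B\setminus S} f(x)$ and $\Sigma(\ell) = \sum_{x\in S} f(x)$. Substituting into the definition of $\op{update}$ and using the two set decompositions, the first output component becomes $(r+\Sigma(c-\ell),\,\ell) = (\sum_{x\in X\setminus S} f(x),\,\beta_S) = w(S)(f)$; the third becomes $(r+\Sigma(\ell),\,c-\ell) = (\sum_{x\in X\setminus(B\setminus S)} f(x),\,\beta_{B\setminus S}) = w(B\setminus S)(f)$; and the middle component $(r,\,\Sigma(c-\ell),\,\Sigma(\ell))$, read as an element of $M^{(3)}$, is the triple $(\sum_{\chi_S^B(x)=0} f(x),\,\sum_{\chi_S^B(x)=1} f(x),\,\sum_{\chi_S^B(x)=2} f(x))$, which is precisely $M^{(\chi_S^B)}(f) = F\chi_S^B(f)$ by the definition of $\chi_S^B$ in \eqref{eq:ax} together with the image-measure action of $M^{(-)}$ on maps.

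The only genuinely delicate point --- and the reason the construction works for arbitrary, possibly non-cancellative, monoids --- is the step $c-\ell = \beta_{B\setminus S}$: the subtraction takes place at the level of multiplicities in $\N$, where it is always well-defined and exact once $S\subseteq B$ forces $\beta_B\ge\beta_S$ pointwise, after which $\Sigma$ recovers $\sum_{x\in B\setminus S} f(x)$ \emph{without ever subtracting elements of $M$}. I expect the remaining verifications to be routine rewriting; the main care is simply to keep the three set decompositions and the restriction to $M_{\neq 0}$ straight.
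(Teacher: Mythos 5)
Your proposal is correct and follows essentially the same route as the paper's own proof: both verify the two coherence equations by direct computation, both hinge on the observation that the truncated bag difference is exact (your $c-\ell=\beta_{B\setminus S}$ is the paper's $(t\downarrow B)-(t\downarrow S)=(t\downarrow(B\setminus S))$), and both use the same disjoint decompositions of $X\setminus S$ and $X\setminus(B\setminus S)$ to match the three output components of $\op{update}$. The only difference is notational ($\beta_T$ versus the paper's $t\downarrow T$), so there is nothing substantive to add.
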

\begin{proof}
  We prove that the refinement interface for $M^{(-)}$ defined in
  \autoref{sec:noncan} satisfies the two axioms in
  \autoref{D:refinement-interface}. Let $t\in FX$. For the first axiom
  we compute as follows:
  \begin{align*}
    w(X)(t) &= \big(\textstyle\sum_{x\in X\setminus X}t(x),\;
      (m\mapsto\big|\{ x\in X\mid t(x) = m\}\big|) \big)
              \\
    &= \big(0, \Bag\pi_1(\{(m,x)\in M_{\neq 0}\times X \mid t(x) = m\})\big)
      \\
    &= \big(0, \Bag\pi_1(\{(t(x),x)\mid x\in X, t(x) \neq 0\})\big)
    \\
    &= \big(0, \Bag\pi_1(\flat(t))\big)\\
    &= \op{init}\big(F!(t),\Bag\pi_1(\flat(t))\big).
  \end{align*}
  Let us now verify the second axiom concerning $\op{update}$. 
  In order to simplify the notation, we define the \emph{restriction} of bags of edges by
  \[
    (t\downarrow B) \in \BagM,
    \quad
    (t\downarrow B)(m) = |\{x\in B\mid t(x) = m\}|
    \quad
    \text{for }B\subseteq X, t\in M^{(X)}
  \]
  and define their sum by
  \newcommand{\xsum}[1]{\underset{#1}{{\textstyle\sum\,}}}
  \[
    \xsum{B} t := \csum(t\downarrow B) = \sum_{x\in B} t(x)\qquad\text{for
    }B\subseteq X.
  \]
  So we have
  \begin{align*}
    w(B)(t) &= \big(\textstyle\sum_{x\in X\setminus B}t(x), m\mapsto |\{x\in B\mid t(x) = m\}|\big)
            \\
      &= (\xsum{X\setminus B}t, (t\downarrow B))
  \end{align*}
  With the substraction of bags defined by 
  $(a-b)(y) = \max(0, a(y)-b(y))$ for $a,b\in \Bag Y$, we have
  \[
    (t\downarrow B) - (t\downarrow S)
    = (t\downarrow (B\setminus S))
    \qquad\text{for }S\subseteq B\subseteq X.
  \]
  Then for $S\subseteq B\subseteq X$ we can compute as follows:
  \begin{align*}
    &\phantom{=}\,\,\,\fpair{w(S),F\chi_S^B,w(B\!\setminus\! S)}(t)
      \\
    &= (w(S)(t),F\chi_S^B(t),w(B\!\setminus\! S)(t))
      \\
    &= \big((\xsum{X\setminus S}t,(t\downarrow S)),~
      (\xsum{X\setminus B} t, \xsum{B\setminus S}t, \xsum{S}t),~
      (\xsum{X\setminus (B\setminus S)}t,(t\downarrow (B\setminus S)))\big)
    \\
    &=
      \begin{array}[t]{rll}
      \big(
        & (\xsum{X\setminus B} t + \smash{\underbrace{\csum(t\downarrow (B \setminus S))}_{
          \xsum{B\setminus S}t}}
          , (t\downarrow S)),
      &(\xsum{X\setminus B} t, \csum(t\downarrow (B\setminus S)), \csum(t\downarrow S)),\\
      && (\xsum{X\setminus B} t + \xsum{S} t, ((t\downarrow B) - (t\downarrow S)))
      \big)
      \end{array}
    \\
    &=
      \begin{array}[t]{r@{}l}
      \big(
        & (\xsum{X\setminus B} t + \csum((t\downarrow B) - (t\downarrow S)), (t\downarrow S)), \\
      &(\xsum{X\setminus B} t, \csum((t\downarrow B) - (t\downarrow S)), \csum(t\downarrow S)),\\
      & (\xsum{X\setminus B} t + \csum(t\downarrow S), ((t\downarrow B) - (t\downarrow S)))
      \big)
      \end{array}
    \\
    &\overset{(*)}{=} \op{update}((t\downarrow S), (\xsum{X\setminus B} t, (t\downarrow B)))
    \\
    &= \op{update}\big((m\in M_{\neq 0}\mapsto |\{x\in S\mid t(x) = m\}|), w(B)(t)\big)
      \\
    &= \op{update}\big(\{m\in M_{\neq 0}\mid (m,x)\in \flat(t), x\in  S\}, w(B)(t)\big)\\
    &= \op{update}\big(\{a \in A\mid (a,x)\in \flat(t), x\in S\}, w(B)(t)\big),
  \end{align*}
  where the step labelled~$(*)$ uses the definition of $\op{update}$:
  \[
    \op{update}(\ell, (r, c)) =
    ((r + \Sigma(c-\ell), \ell),
    (r, \Sigma(c-\ell), \Sigma(\ell)),
    (r + \Sigma(\ell), c - \ell)).
    \tag*{\qed}
  \]
\end{proof}

\subsection{Proof of \autoref{P:monoids-fast}}
  We implement the bags $\Bagf(M_{\neq 0})$ used in
  $W=M\times\Bagf(M_{\neq 0})$ as balanced search trees with keys
  $M_{\neq 0}$ and values $\N$ following Adams~\cite{Adams93}. In
  addition, we store in every node $x$ the value $\Sigma(b)$, where $b$ is
  the bag encoded by the subtree rooted at $x$; in the following we
  simply write $\Sigma(x)$. Hence, for every
  bag $b$, the value $\Sigma(b)$ is immediately available at the root
  node of the search tree for $b$.

  Note that our search trees cannot have more nodes than the size $|M|$ of
  their index set, and the number of nodes is also not greater than
  the number $m$ of all edges. Hence, their size is bounded by
  $\min(|M|, m)$.

  Recall e.g.~\cite[Section~14]{CormenEA}, that the key operations $\op{insert}$,
  $\op{delete}$ and $\op{search}$ have logarithmic time
  complexity in the size of a given balanced binary search tree. 

  We still need to argue that maintaining the values $\Sigma(x)$ in
  the nodes does not increase this complexity. This is obvious for the
  $\op{search}$ operation as it does not change its argument search
  tree. For $\op{insert}$ and $\op{delete}$ recall from
  \emph{op.~cit.} that those operation essentially trace down one path
  starting at the root to a node (at most a leaf) of the given search
  tree. In addition, after insertion or deletion of a node, the
  balanced structure of the search tree has to be fixed. This is done by
  tracing back the same path to the root and (possibly) performing
  \emph{rotations} on the nodes occurring on that path. Rotations are
  local operations changing the structure of a search tree but
  preserving the inorder key ordering of subtrees (see
  \autoref{fig:rotations}).
  \begin{figure}
    \centering
    \begin{tikzpicture}
      \begin{scope}[lambdatree, only math nodes]
        \node[root] (t1) {}
        child { node{y}
          child { node {x}
            child { node[subtree] {\alpha} }
            child { node[subtree] {\beta} }
          }
          child { node[subtree] (gamma1) {\gamma} }
        } ;
      \end{scope}
      \begin{scope}[lambdatree, only math nodes]
        \node[root] (t2) at (5cm,0) {}
        child { node {x}
          child { node[subtree] (alpha2) {\alpha} }
          child { node{y}
            child { node[subtree] {\beta} }
            child { node[subtree] {\gamma} }
          }
        } ;
      \end{scope}
      \path[draw,->,shorten <= 2mm,shorten >= 2mm]
      (gamma1) edge[transform canvas={yshift=3mm}] node[above]{right rotation} (alpha2)
      (alpha2) edge node[below]{left rotation} (gamma1)
      ;
  \end{tikzpicture}
    \caption{Rotation operations in binary search trees.}\label{fig:rotations}
  \end{figure}
  
  Clearly, in order to maintain the correct summation values in a
  search tree under a rotation, we only need to adjust those values in
  the nodes $x$ and $y$. For example, for right rotation the new
  values are:
  \[
    \Sigma(y) = \Sigma(\beta) + \Sigma(\gamma);\qquad\Sigma(x) =
    \Sigma(\alpha) + \Sigma(y).
  \]
  In addition, when inserting or deleting a node $x$ we must recompute
  the $\Sigma(y)$ of all nodes $y$ along the path from the root to $x$
  when we trace that path back to the root. This can clearly be
  performed in constant time for each node $y$ since $\Sigma(y)$ is
  the sum (in $M$) of those values stored at the child nodes of $y$.

  In summary, we see that maintaining the desired summation values only
  requires an additional constant overhead in the backtracing
  step. Consequently, the operations of our balanced binary search
  trees have a run time in $\CO(\log \min(|M|,m))$, so that we obtain
  the desired overall time complexity of $\op{update}$.\qed

\subsection{Proof of \autoref{C:ncancel}}
  We proceed precisely as in the proof of \autoref{T:cancel} using
  that $\rifactor_M(n',m') = \log \min(|M|, m')$ (in lieu of $\rifactor_M(n',m') =
  1$). We proceed by case distinction. If $M$ is a finite
  monoid, then $\rifactor_M(n',m')$ is in $\CO(1)$. Hence, we obtain the same
  overall complexity as in \autoref{T:cancel} as desired. 

  If $M$ is infinite, we have that $\rifactor_M(n',m')$ is in $\CO(\log m')$. Thus, by
  \autoref{T:comp} we obtain:
  \begin{align*}
    \CO((m'+n') \cdot \log n' \cdot \max (\log m', r)) 
    & = \CO((rk +n) \cdot \log (n+k) \cdot (\log(rk) + r)) \\
    & = \CO((rk +n) \cdot \log (n+k) \cdot (\log k + r)),
  \end{align*}
  where we use that $\log(rk) = \log r + \log k$ in the second
  equation and, in the first one, that $\max$ can be replaced by $+$
  in $\CO$-notation.
  \qed
\begin{remark} Here we provide a more refined comparison of the
  complexity of the Högberg et al.'s algorithm with
  the instances of our algorithm for weighted tree automata.
  \begin{enumerate}
  \item For arbitrary (non-cancellative) monoids they provide a
    complexity of
    $\CO(r \cdot k \cdot n)$~\cite[Theorem~27]{HoegbergEA07}. Again,
    the number $m$ of edges of the input coalgebra satisfies
    $m \leq rk$. Moreover, for a fixed input signature, $m$ and $k$ are
    asymptotically equivalent. Assuming further that $m \geq n$, which
    means that there are no isolated states, we see that the bound in
    \autoref{tab:instances} indeed improves the complexity of
    $\CO(m \cdot n)$ Högberg et al.'s algorithm. To see this note
    first that the number $m$ of edges is in $\CO(n^{r+1})$ so that we
    obtain
    \[
      \CO(m\cdot \log(m)^2)
      \subsetneq
      \CO(m\cdot \sqrt[r+1]{m})
      \subseteq
      \CO(m\cdot \sqrt[r+1]{n^{r+1}})
      = \CO(m\cdot n).
    \]
    In the first step, it is used that for every $d \ge 1$ and $0 < c < 1$ we
    have $\CO(\log^d(m))\subsetneq \CO(m^{c})$.
  \item For cancellative monoids, the time bound given in
    \emph{op.~cit.}~is $\CO(r^2\cdot t\cdot\log n)$
    \cite[Theorem~29]{HoegbergEA07}. Under our standard assumption
    that $rt \geq n$ our complexity from \autoref{T:cancel} lies in
    $\CO(r^2 \cdot t \cdot \log(t+n)$, which is only very slightly
    worse.
  \end{enumerate}
\end{remark}

\section{Further Benchmarks}\label{S:addbench}
\subsection{More details for main benchmark}
The benchmarks are designed in such a way that the run time is maximal. This
means that:
\begin{enumerate}
\item In the first partition that is computed, all states are identified.
\item In the final partition, all states are distinguished.
\end{enumerate}

We minimize randomly generated coalgebras for
\[
  FX = M\times M^{(\Sigma X)}
\]
with $\Sigma X= 4\times X^r$, for $r\in \{1,\ldots,5\}$. When generating a
coalgebra with $n$ states, we randomly create 50 outgoing transitions per state,
leading to $50\cdot n$ transitions in total. As described in \autoref{sec:des},
we need to introduce one intermediate state per transition, leading to an
actual number $n' = 51\cdot n$ of states. Every transition of rank $r$ has one incoming
edge and $r$ outgoing edges, hence $m = (r+1)\cdot k = 50\cdot (r+1) \cdot n$.
The performance evaluation is listed in \autoref{tab:extended}.

\newcommand{\boolmon}{$(2,\vee,0)$}
\newcommand{\Zmax}{$(\N,\max,0)$}
\newcommand{\WordOr}{$(\Powf(64),\cup,\emptyset)$}
\begin{table}
  \caption{Extended version of \autoref{tab:maxfilesize}: We take $n$ states and
    50 transitions per state, leading to $n'$ states and $m$ edges in total. Parsing takes $t_p$ seconds;
    the first partition has $P_1$ blocks and its computation takes $t_i$ seconds;
    the final partition has $P_f$ blocks and its computation takes additional
    $t_r$ seconds.
    }
    \label{tab:extended}
    \medskip
  \centering
\begin{tabular}{@{}l@{\hspace{2mm}}lr@{~~}r@{~~}r@{~~}r@{~~}r@{~~}r@{~~}r@{~~}r@{~~}r@{}}
\toprule
r & Monoid $M$  & \multicolumn{1}{c}{$n$} & \multicolumn{1}{c}{$n'$} &
\multicolumn{1}{c}{$m$} & \multicolumn{1}{c}{Size} & \multicolumn{1}{c}{$P_1$} & \multicolumn{1}{c}{$P_f$} & \multicolumn{1}{c}{$t_p$} & \multicolumn{1}{c}{$t_i$} & \multicolumn{1}{c}{$t_r$} \\
\midrule
1 & \boolmon & 132177 &        6741027 & 13217700 & 117 MB    &         6 &     132177 & 53 & 32 & 156 \\
1 & \Zmax    & 114888 &        5859288 & 11488800 & 122 MB    &       416 &     114888 & 58 & 34 & 66 \\
1 & \WordOr  & 113957 &        5811807 & 11395700 & 131 MB    &        54 &     113957 & 61 & 32 & 109 \\
\midrule
2 & \boolmon &  98670 &        5032170 & 14800500 & 123 MB    &         6 &      98670 & 46 & 31 & 212 \\
2 & \Zmax    &  95287 &        4859637 & 14293050 & 136 MB    &       404 &      95287 & 54 & 30 & 108 \\
2 & \WordOr  &  92434 &        4714134 & 13865100 & 141 MB    &        54 &      92434 & 55 & 31 & 144 \\
\midrule
3 & \boolmon &  85016 &        4335816 & 17003200 & 138 MB    &         6 &      85016 & 47 & 20 & 167 \\
3 & \Zmax    &  70660 &        3603660 & 14132000 & 127 MB    &       397 &      70660 & 49 & 25 & 82 \\
3 & \WordOr  &  69623 &        3550773 & 13924600 & 132 MB    &        54 &      69623 & 49 & 25 & 127 \\
\midrule
4 & \boolmon &  59596 &        3039396 & 14899000 & 119 MB    &         6 &      59596 & 41 & 25 & 121 \\
4 & \Zmax    &  62665 &        3195915 & 15666250 & 136 MB    &       397 &      62665 & 48 & 26 & 66 \\
4 & \WordOr  &  57319 &        2923269 & 14329750 & 130 MB    &        54 &      57319 & 47 & 25 & 115 \\
\midrule
5 & \boolmon &  49375 &        2518125 & 14812500 & 116 MB    &         6 &      49375 & 38 & 24 & 90 \\
5 & \Zmax    &  49926 &        2546226 & 14977800 & 127 MB    &       376 &      49926 & 44 & 20 & 52 \\
5 & \WordOr  &  48962 &        2497062 & 14688600 & 129 MB    &        54 &      48962 & 45 & 20 & 92 \\
  \bottomrule
\end{tabular}
\end{table}

\subsection{Maximally dense WTAs}
As another benchmark, we generated very dense WTAs, showing that also in this
degenerated case, $m$ = 10 million edges (in the sense of graphical representation) can be
handled with 16 GB of RAM within a few minutes, as described in the following.

\begin{table}\centering
  \caption{
    Dense WTAs:
    For $(2,\vee,0)$, the final partition consists of very few blocks (less than
    20, depending on the polynomial). For the other cases, all states are
    distinguished after the initialization phase.
    }
    \label{tab:degenerated}
    \medskip
  \begin{tabular}{@{}l@{\hspace{2mm}}lrrrrrrrrr@{}}
    \toprule
    $M$ & $\Sigma X$ & $p$ & $n$ & $n' / 10^6$& $m/ 10^6$ & MiB & $t_p$ & $t_i$ & $t_r$ & $t$ \\
    \midrule
    $(2,\vee,0)$                & $8\times X$                 & 0.7 & 1478 & $5.24$ & $10.48$ & 82  & 35   & 28   & 37   & 115   \\
                                & $1+4\times X^2$             & 0.7 & 151  & $4.13$ & $12.39$ & 83  & $33$ & $21$ & $34$ & $102$ \\
                                & $4+3\times X + 2\times X^2$ & 0.7 & 190  & $4.15$ & $12.41$ & 83  & $33$ & $19$ & $73$ & $143$ \\
                                & $3\times X^5$               & 0.7 & 11   & $1.59$ & $9.57$  & 47  & $20$ & $11$ & $11$ & $45$  \\
    \midrule
    $(\Z,\max,0)$               & $8\times X$                 & 0.7 & 1450 & $5.05$ & $10.09$ & 182 & $43$ & $27$ & $40$ & $127$ \\
                                & $1+4\times X^2$             & 0.7 & 150  & $4.05$ & $12.15$ & 162 & $41$ & $22$ & $40$ & $121$ \\
                                & $4+3\times X + 2\times X^2$ & 0.7 & 188  & $4.02$ & $12.02$ & 162 & $40$ & $15$ & $39$ & $111$ \\
                                & $3\times X^5$               & 0.7 & 11   & $1.59$ & $9.57$  & 79  & $23$ & $11$ & $21$ & $60$  \\
    \midrule
    $(\Pow(64),\cup,\emptyset)$ & $8\times X$                 & 0.7 & 1408 & $4.76$ & $9.52$  & 164 & $49$ & $25$ & $37$ & $121$ \\
                                & $1+4\times X^2$             & 0.7 & 148  & $3.89$ & $11.67$ & 151 & $44$ & $21$ & $37$ & $118$ \\
                                & $4+3\times X + 2\times X^2$ & 0.7 & 186  & $3.89$ & $11.64$ & 152 & $44$ & $18$ & $38$ & $115$ \\
                                & $3\times X^5$               & 0.7 & 11   & $1.60$ & $9.57$  & 77  & $25$ & $11$ & $21$ & $61$  \\
    \bottomrule
  \end{tabular}
\end{table}

For a fixed
number $n = |C|$ of states, monoid $M$ and signature $\Sigma$, we
uniformly generated $c(x)\in M^{(\Sigma C)}$ for every state $x\in
C$:\footnote{We overload notation and write $\Sigma$ both for a
  signature and its associated polynomial functors on sets.}
for every $y\in \Sigma C$, we put $c(x)(y) = 0$ with probability 0.7
and otherwise choose $c(x)(y) \in M$ randomly. We then counted the
number $m$ of edges, checked whether the minimization stayed within
the RAM limit, and if so, recorded the run time.
\autoref{tab:degenerated} lists the sizes of the largest WTAs that can
be handled. For each configuration we generated five automata, and
averaged their values for \autoref{tab:degenerated}; each of the
maximal deviations is insignificantly small. Note that with higher rank $r$,
even a small number $n$ of states leads to millions of edges.

We see that the upper limit for the number of edges is roughly 10
million, independent of the choices of the monoid or the
signature. In more detail, only the choice of the
signature contributes to $n$ and nuances of $m$. This is hardly
surprising because the signature determines the branching degree of an
automaton and also determines that there are $0.3 \cdot |\Sigma C|$
many expected transitions. With roughly 10 million edges, the file
sizes vary slightly, depending on the representation of the
coefficients from the different monoids (coefficients vanish for
$(2,\vee,0)$).

In \autoref{tab:degenerated} we show the benchmarks of very dense WTAs.

\bigskip
\subsection{Benchmarks for DFAs and PRISM Models}
Another benchmark suite consists of randomly generated DFAs for a
fixed size $n$ and alphabet $A$, which were generated by uniformly
choosing a successor for each state and letter of the alphabet and for
each state whether it is final or not. All of the resulting automata
were already minimal, which means that our algorithm has to refine the
initial partition $X/Q = \{X\}$ until all blocks are singletons, which
requires a maximal number of iterations. The results in \autoref{tab:bench-dfa-1} and
\ref{tab:bench-dfa-2}
show that the implementation can handle coalgebras with 10 million
edges, and that parsing the input takes more time than the actual
refinement for these particular systems. \twnote{Which functor is
  used?  $2\times(-)^A$ or $2\times \Pow(A\times(-))$? use booktabs in
  the figures and add time for the initial partition. does state count
  mean before or after the desorting? HP: $n$ and $|A|$ are defined
  above and the functor is the first one (it's introduced as DFA, not
  LTS)}

\begin{table}[h]
  \hspace*{-2mm}%
  \rlap{%
  \begin{minipage}{\textwidth+6mm}%
    \hfill
  \begin{minipage}[t]{.27\textwidth}%
    \strut\vspace*{-\baselineskip}\newline%
  \begin{tabular}{@{}c@{\hspace{.5em}}r@{\hspace{0.7em}}cr@{}}
    \toprule
    \clap{~Input} & \multicolumn{3}{c}{Time (in s) to} \\
    $n$ & Parse
        & Init
        & Refine \\
    \midrule
    1000 & 2.40  & 0.76 & 0.36 \\
    2000 & 4.96  & 1.58 & 0.74 \\
    3000 & 7.39  & 2.11 & 1.40 \\
    4000 & 10.20 & 3.20 & 1.67 \\
    5000 & 13.06 & 4.05 & 2.10 \\
    \bottomrule
  \end{tabular}
  \end{minipage}%
  \hfill%
  \begin{minipage}[t]{.26\textwidth}%
    \strut\vspace*{-\baselineskip}\newline%
  \begin{tabular}{@{}r@{\hspace{.5em}}r@{\hspace{.5em}}rr@{}}
    \toprule
    \clap{Input~~~~} & \multicolumn{3}{c}{Time (in s) to} \\
    \multicolumn{1}{c}{$n~$}
                            & Parse
                            & Init
                            & Refine \\
    \midrule
    600 & 44.75 & 1.82 &  2.88 \\
    700 & 50.93 & 4.29 & 3.18 \\
    800 & 60.78 & 2.54 & 4.16 \\
    900 & 68.34 & 2.76 & 4.60 \\
    1000 & 75.79 & 3.05 & 5.21 \\
    \bottomrule
  \end{tabular}
  \end{minipage}%
  \hfill\hspace*{0mm}
  \\%
  \hspace*{0mm}\hfill
  \begin{subfigure}[t]{.27\textwidth}%
    \caption{DFAs for $|A|=10^3$}
    \noshowkeys\label{tab:bench-dfa-1}
  \end{subfigure}%
  \hfill%
  \begin{subfigure}[t]{.26\textwidth}%
    \caption{DFAs for $|A|=10^4$}
    \noshowkeys\label{tab:bench-dfa-2}
  \end{subfigure}%
  \hfill\hspace*{0mm}
  \end{minipage}%
  }%
  \\
  \caption{Performance on randomly generated DFAs}
    \label{tab:bench}
\end{table}
\newcommand{\benchparam}[1]{\scriptsize \textnormal{(}#1\textnormal{)}}
\begin{table} \centering
  \begin{tabular}{@{}lrr@{\hspace{.31em}}r@{\hspace{0.5em}}rrrr@{}}
    \toprule
    PRISM Model & \multicolumn{2}{c}{Input} & \multicolumn{3}{c}{Time (s) to}
                & \multicolumn{2}{c}{Time (s) of} \\
                        & \multicolumn{1}{c}{States}
                        & \multicolumn{1}{c}{Edges}
                        & Parse
                        & Init
                        & Refine
                        & Valmari
                        & mCRL2 \\
    \midrule
    \tt fms \benchparam{n=4} &   35910 &  237120 & 0.48 & 0.12 & 0.16 & 0.21 & \multicolumn{1}{c}{--} \\
    \tt fms \benchparam{n=5} &  152712 & 1111482 & 2.46 & 0.68 & 1.10 & 1.21 & \multicolumn{1}{c}{--} \\
    \tt fms \benchparam{n=6} &  537768 & 4205670 & 9.94 & 2.91 & 5.56 & 5.84 & \multicolumn{1}{c}{--} \\
    \tt \makecell[l]{wlan2\_collide\\\benchparam{COL=2,TRANS\_TIME\_MAX=10}}      &   65718 &   94452 &  0.51 & 0.29 &  0.59 & 0.14 & 0.42 \\
    \tt \makecell[l]{wlan0\_time\_bounded\\{\scriptsize \textnormal{(}TRANS\_TIME\_MAX=10,DEADLINE=100\textnormal{)}}}&  582327 &  771088 &  5.26 & 3.07 &  5.52 & 0.92 & 3.18 \\
    \tt \makecell[l]{wlan1\_time\_bounded\\\benchparam{TRANS\_TIME\_MAX=10,DEADLINE=100}} & 1408676 & 1963522 & 13.42 & 6.17 & 16.13 & 2.52 & 8.58 \\
    \bottomrule
  \end{tabular}
  \\[2mm]
  \caption{Performance on PRISM benchmarks}
  \noshowkeys\label{tab:bench-prism}
\end{table}

In addition, we translated the benchmark suite of the model checker
PRISM~\cite{KNP11} to coalgebras for the functors $\N\times \R^{(X)}$ for continuous
time markov chains (CTMC) and $\N\times \Pow(\N\times(\Dist X))$ for Markov decision
processes (MDP). In contrast to DFAs, these functors are compositions of several
basic functors and thus require the construction described in \autoref{sec:des}. Two
of those benchmarks~\cite{prismbenchmarks} are shown in \autoref{tab:bench-prism}
with different parameters, resulting in three differently sized coalgebras each.%
\footnote{The full set of benchmarks and their results can be found at
  \url{https://git8.cs.fau.de/software/copar-benchmarks}} The \emph{fms} family of
systems model a flexible manufacturing system as CTMC, while the \emph{wlan}
benchmarks model various aspects of the IEEE 802.11 Wireless LAN protocol as
MDP.\hpnote{The linked pages have citations for ``Flexible Manufacturing System''
  and ``WLAN''. Should we cite the same sources?}

\autoref{tab:bench-prism} also includes the total run time of two
additional partition refinement tools: A C++
implementation\footnote{Available at
  \url{https://git8.cs.fau.de/hpd/mdpmin-valmari}} of the algorithm
described in \cite{ValmariF10} by Antti Valmari which can minimize
MDPs as well as CTMCs, and the tool \texttt{ltspbisim} from the mCRL2
toolset~\cite{BunteEA19} version 201808.0 which implements a recently
discovered refinement algorithm for MDPs~\cite{GrooteEA18} (but does
not support CTMCs, hence there is no data in the first three lines).

The results in \autoref{tab:bench-prism} show that refinement for the \emph{fms}
benchmarks is faster than for the respective \emph{wlan} ones, even though the
first group has more edges. This is due to~(a) the fact that the functor
for MDPs is more complex and thus introduces more indirection into our
algorithms, as explained in \autoref{sec:des}, and (b)~that our optimization for
one-element blocks fires much more often for
\emph{fms}.\hpnote{\textbf{A lot} more often. That's the only reason we can
  compete with Valmari here. His algorithm even has better O-notation complexity
  in |A| (which we to consider to be fixed in this example, but still...)}

It is also apparent that \copar{} is slower than both of the other tools in our
comparison, by a factor of up to 15 for the presented examples. This performance
difference can be partly attributed to the fact that our implementation is
written in Haskell and the other tools are written in C++. In addition,
\copar{}'s genericity and modularity take a toll on performance.


\end{document}